\documentclass[runningheads]{llncs}
\usepackage[T1]{fontenc}
\usepackage{hyperref}
\usepackage{graphicx, amsmath, nicematrix, longtable}
\usepackage{tikz}
\usepackage{listings, booktabs, cite, xcolor, multirow, caption}

\definecolor{c1}{RGB}{220,235,255}
\definecolor{c2}{RGB}{255,230,210}
\definecolor{c3}{RGB}{120,200,200}
\title{Appendix}
\newcommand{\cbox}[1]{%
  \begingroup
  \setlength{\fboxsep}{1pt}%
  \colorbox{#1}{\strut\hspace{1.2em}}%
  \endgroup
}

\usetikzlibrary{calc, fit, backgrounds}
\newcommand{\indp}{\textrm{Ind}}
\newcommand{\subp}{\textrm{Sub}}
\newcommand{\homp}{\textrm{Hom}}
\begin{document}

\title{Faster Algorithms for Finding Small Induced Patterns in Sparse Host Graphs}
%
%
\author{Priyanshi Agrawal \and Balagopal Komarath}
\titlerunning{Faster Algorithms for Induced Patterns}
\authorrunning{P. Agrawal and B. Komarath}
%
\institute{Indian Institute of Technology, Gandhinagar, India\\
\email{agrawalpriyanshi15@gmail.com}\\
\email{bkomarath@rbgo.in}}
\maketitle              
\begin{abstract}
  We study algorithms to detect induced $G$-subgraphs, for fixed
  patterns $G$, given host graphs with $n$ vertices and $m$ edges as
  input. We show that:
\begin{itemize}
\item There are at least five $5$-vertex connected graphs $G$ (out of
  21) that can be detected in $\widetilde{O}(nm)$-time.
\item There are at least sixty-five $6$-vertex connected graphs $G$
  (out of 112) that can be detected in $\widetilde{O}(m^2)$-time.
\item The graphs $P_7$ and $C_7$ can be detected in $\widetilde{O}(nm^2)$-time.
\end{itemize}

Our main technical tool is a generalized notion of width of tree
decompositions that we call $(p, q)$-width tree decompositions. We use
$(p, q)$-width tree decompositions to derive algorithms with
running-time $\widetilde{O}(n^pm^q)$ for detecting induced
subgraphs. Our upper-bounds are never worse than existing upper-bounds
and are faster when $m = o(n^2)$. Moreover, for some patterns, such as
$C_7$, our algorithms are optimal under standard complexity theoretic
assumptions.

The new technique we use to obtain fast algorithms for fixed pattern
graphs is not scalable to families of pattern graphs. We further
refine this idea by constructing pattern-based polynomials that
exploit the structure of the tree decomposition, and not just its
width, to obtain algorithms for larger classes of graphs. We show:
\begin{itemize}
\item An $\widetilde{O}(m^{k-1})$-time algorithm for detecting induced
  $P_{2k}$ in bipartite graphs.
\item An $\widetilde{O}(nm^{k-1})$-time algorithm for detecting
  induced $C_{2k}$ in bipartite graphs.
\end{itemize}
These algorithms improve the best known upper-bounds for general
graphs.

\keywords{Combinatorial Graph Algorithms \and Induced Subgraph
  Detection \and Graph Pattern Polynomials.}
\end{abstract}

\section{Introduction}
\label{sec:intro}

We consider the computational problem of finding induced copies of
fixed $k$-vertex pattern graphs $G$ in a host graph $G$ with $n$
vertices and $m$ edges given as input. The straight-forward search
algorithm for this problem takes $O(n^k)$-time, which is
polynomial-time since we regard $k$ as a constant. Williams
\cite{WILLIAMS2005357} showed that even minor improvements to this
trivial upper-bound can be used to speed-up algorithms for NP-complete
problems such as $\text{MAX2CNFSAT}$. In particular, they showed that
the $O(n^\omega)$-time algorithm for triangle finding can be used to
derive a $O(2^{\omega n / 3})$-time algorithm for $\text{MAX2CNFSAT}$.

In a seminal paper, Ne{\v s}etřil and Poljak \cite{Neetil1985OnTC}
showed that every $3k$-vertex pattern graph can be found in
$O(n^{\omega k})$-time and asked the question whether there are
patterns where we can do significantly better. To the best of our
knowledge, there is only one pattern family for which we know a
(slightly) better upper-bound than the one given by Ne{\v s}etřil
and Poljak \cite{Neetil1985OnTC}'s $41$-year-old algorithm. Therefore,
research on finding induced copies of fixed size patterns has focused
on (1) improvements for small fixed patterns, mostly on four to nine
vertices and (2) combinatorial algorithms, where we improve upon the
trivial $O(n^k)$-time bound without using fast matrix multiplication
algorithms (See
\cite{DBLP:conf/soda/WilliamsWWY15,DBLP:conf/fsttcs/BlaserKS18,doi:10.1137/140978211}). For
example, Bl{\"{a}}ser, Komarath, and Sreenivasaiah
\cite{DBLP:conf/fsttcs/BlaserKS18} showed that there is a
$O(n^5)$-time algorithm for finding $C_7$ and recently Dalirrooyfard
and Williams \cite{DBLP:conf/focs/DalirrooyfardW22} showed that this
is optimal assuming $K_5$ \footnote{The clique on five vertices}
requires $O(n^5)$-time for combinatorial algorithms.

In practice, many host graphs in which we wish to find patterns are
sparse. Therefore, there has also been research on deriving faster
algorithms taking sparse host graphs as input. A common restriction is
to assume that the host graph is $d$-degenerate for some fixed $d$
\cite{DBLP:conf/focs/0002R21}. For example, we can find
$C_7$ in $O(n^2)$-time in a given $n$-vertex, $d$-degenerate
graph. The $O(.)$ here hides factors that are exponential in
$d$. Since the degeneracy of a graph can be as high as $n$, these
algorithms are not suitable for arbitrary host graphs.

Another line of research to improve the running-time on sparse host
graphs is to consider the running-time as a function on both the
number of vertices and the number of edges in the host graph. For
example, recently, Komarath, Kumar, Mishra and
Sethia,\cite{DBLP:journals/jcss/KomarathKMS26} have shown that $C_6$
has $\widetilde{O}(m^2)$-time \footnote{We use $\widetilde{O}(.)$ to
  denote that we ignore factors logarithmic in the input size.}
algorithms. They obtained this improvement by defining matched
treewidth, a new notion of width of tree decompositions. A key
sub-routine in their paper is an algorithm to construct small
\emph{arithmetic circuits} (Definition~\ref{def:arithmetic_circuit})
for \emph{graph homomorphism polynomials}
(Definition~\ref{def:homomorphism_polynomial}) when the pattern graph
has small matched treewidth. To use these arithmetic circuits to find
induced patterns, they then construct specific tree decompositions
with the required matched treewidth and an additional property that
the tree decomposition can be used to avoid duplicate homomorphisms to
the same occurence of the pattern as a subgraph while constructing the
circuit. However, their work is mainly about counting (not necessarily
induced) subgraphs.

\paragraph{Our results:} In this paper, we define a new notion of
width for tree decompositions called $(p, q)$-width. The key
ingredient behind our improved algorithms is a
$\widetilde{O}(n^pm^q)$-time algorithm to construct arithmetic
circuits for graph homomorphism polynomials from pattern graphs of
$(p, q)$-width. Then, we construct specific tree decompositions with
certain additional conditions that we call \emph{automorphism avoiding
  conditions} for many pattern graphs on $5$ to $7$ vertices and
derive the following results:

\begin{itemize}
\item There are at least five $5$-vertex connected graphs $G$ (out of
  21) that can be detected in $\widetilde{O}(nm)$-time.
\item There are at least sixty-five $6$-vertex connected graphs $G$
  (out of 112) that can be detected in $\widetilde{O}(m^2)$-time.
\item The graphs $P_7$ and $C_7$ can be detected in
  $\widetilde{O}(nm^2)$-time.
\end{itemize}

We derive $O(nm)$, $O(m^2)$, and $O(nm^2)$ algorithms for detecting
induced $P_5$, $P_6$, and $P_7$ respectively. To the best of our
knowledge, the previous best algorithmic upper bounds which are given
by Ho{\`{a}}ng, Kaminski, Sawada, and
Sritharan\cite{DBLP:journals/dam/HoangKSS13} for detecting induced
$P_5$, $P_6$, and $P_7$ are $O(m^2)$, $O(nm^2)$, and $O(m^3)$
respectively when considering both number of vertices and edges of the
host graph. Therefore, the above results improve existing bounds for
lengths $5$, $6$, and $7$ for the cases where
$m = \Theta(n^{1+\epsilon})$, $0 \leq \epsilon < 1$.

We give a $\widetilde{O}(nm)$-time algorithm for $C_5$ which is
optimal. This algorithm is an improvement on the $O(n^3)$-time
algorithm given by Komarath, Kumar, Mishra, and Sethia
\cite{DBLP:journals/jcss/KomarathKMS26} when $m = o(n^2)$. We stress
that simply having a $(1, 1)$-width tree decomposition for a pattern
is not sufficient to derive a faster algorithm for finding induced
copies. For example, the complete bipartite graph $K_{2,3}$ has a
$(1, 1)$-width tree decomposition. But, we could not derive
$\widetilde{O}(nm)$-time algorithms for finding it as those tree
decompositions do not have automorphism avoiding conditions. Our
techniques for finding induced subgraphs are illustrated using an
example in Section~\ref{sec:p4inbull}.

One drawback of our method (and many existing methods such as the ones
in
\cite{doi:10.1137/140978211,DBLP:conf/soda/WilliamsWWY15,DBLP:conf/fsttcs/BlaserKS18}
used to derive faster algorithms for finding induced patterns) is that
the analysis of specific subgraphs do not scale to larger
patterns. For example, our algorithm for $C_7$ requires constructing
specific tree decompositions for over 100 graphs. Therefore, we do not
believe it will be possible to extend our analysis, or similar
approaches, to derive algorithms for $C_k$ for arbitrary $k$. However,
we show that by imposing natural restrictions on the host graph, for
example, restricting them to bipartite graphs, the required analysis
may become simple enough for infinite size pattern families. We show:

\begin{itemize}
\item An $\widetilde{O}(m^{k-1})$-time algorithm for detecting induced
  $P_{2k}$ in bipartite graphs.
\item An $\widetilde{O}(nm^{k-1})$-time algorithm for detecting
  induced $C_{2k}$ in bipartite graphs.
\end{itemize}

To the best of our knowledge, for general host graphs, the fastest
algorithms to find induced $P_{2k}$ runs in time $O(n^{2k-2})$
\cite{DBLP:conf/fsttcs/BlaserKS18}. Note that our algorithm is never
worse than this and is significantly faster when $m = o(n^2)$, with
the catch that ours only work when the host graph is
bipartite. Similarly, our algorithm for $C_{2k}$ improves upon the
trivial $O(m^k)$-time algorithm when $m = \omega(n)$.

Observe that an $\widetilde{O}(m^k)$-time algorithm is trivial for
even vertex paths and cycles in general graphs. By assuming that the
host is bipartite, we slighly improve on this running-time, and only
for $m = \Theta(n^{1+\epsilon})$, $0 \leq \epsilon < 1$ for cycles. We
also note that Ho{\`{a}}ng, Kaminski, Sawada and
Sritharan\cite{DBLP:journals/dam/HoangKSS13} gave a
$\widetilde{O}(nm^{k-1})$-time algorithm for finding $P_{2k}$ in
\emph{general graphs} (They do not give algorithms for finding induced
cycles).

\section{Preliminaries}

We begin by defining standard notions of containment in graph theory.

\begin{definition}[Subgraph]
\label{def:subgraph}
Let $H=(V_H,E_H)$ and $G=(V_G,E_G)$ be graphs. We say that
$G$ is a \emph{subgraph} of $H$, denoted by $G\subseteq H$, if
\[
V_G\subseteq V_H
\quad\text{and}\quad
E_G\subseteq
\left\{\{u,v\}\in E_H : u,v\in V_G\right\}.
\]

\end{definition}

\begin{definition}[Induced Subgraph]
    \label{def:induced-subgraph}
Let $G = (V, E)$ be a graph and let $T \subseteq V$.  
The induced subgraph of $G$ on $T$, denoted $G[T]$, is the graph whose vertex set is $T$ and whose edge set consists of all edges of $G$ with both endpoints in $T$.
\end{definition}

\begin{definition}[Minor]
    \label{def:minor}
    Let $H$ and $G$ be graphs. We say that $G$ is a
    \emph{minor} of $H$ (denoted $G \preceq H$) if $G$ can be obtained
    from $H$ by a finite sequence of the operations like vertex
    deletions, edge deletions and edge contractions.
\end{definition}

\begin{definition}[Edge Contraction]
\label{def:edge-contraction}
Let $H=(V_H,E_H)$ be a graph and let $e=\{u,v\}\in E_H$.
The graph obtained by \emph{contracting $e$}, denoted by $H/e$, is
defined as follows.

Let
\[
E'=\{\ell\in E_H : u\in \ell \text{ or } v\in \ell\}
\]
be the set of edges incident to $u$ or $v$, and let
\[
E''=
\bigl\{\{u',w\} : w\in
(N_H(u)\cup N_H(v))\setminus\{u,v\}\bigr\}.
\]

where
\begin{align*}
H/e&=(V_{H/e},E_{H/e})\\
V_{H/e}&=(V_H\setminus\{u,v\})\cup\{u'\}\\
E_{H/e}&=(E_H\setminus E')\cup E''
\end{align*}.

Here, $u'$ is a new vertex representing the contraction of
$u$ and $v$.
\end{definition}

We now define tree decompositions and path decompositions, which are useful for designing dynamic programming algorithms on graphs.

\begin{definition}[Tree Decomposition]
  \label{def:tree-decomposition}
  A \emph{tree decomposition} of a graph $G=(V,E)$ is a pair
  $(\mathcal{T}, \{X_t\}_{t \in V(\mathcal{T})})$ where $\mathcal{T}$
  is a tree and each node $t$ is assigned a subset $X_t \subseteq V$,
  called a bag, such that: (1)
  $\bigcup_{t \in V(\mathcal{T})} X_t = V$, (2) for every edge
  $\{u,v\} \in E$, there exists $t$ with $\{u,v\} \subseteq X_t$, and
  (3) for every vertex $v \in V$, the set $\{t \mid v \in X_t\}$
  induces a connected subtree of $\mathcal{T}$.  The width of a tree
  decomposition is $\max_{t} |X_t| - 1$.  The treewidth of $G$,
  denoted $\mathrm{tw}(G)$, is the minimum width over all tree
  decompositions of $G$.
\end{definition}

\begin{definition}[Path Decomposition]
    \label{def:path-decomposition}
A path decomposition is a tree decomposition in which the underlying tree $\mathcal{T}$ is a path.  
The pathwidth of $G$, denoted $\mathrm{pw}(G)$, is the minimum width over all path decompositions of $G$.
\end{definition}

Finally, we define arithmetic circuits.

\begin{definition}[Arithmetic Circuit]
    \label{def:arithmetic_circuit}
An arithmetic circuit over a ring $R$ is a directed acyclic graph whose nodes are of three types:
(1) input nodes labeled by variables or constants from $R$,
(2) addition gates labeled $+$,
(3) multiplication gates labeled $\times$.

Each gate computes a polynomial over $R$ in the natural way.  
The output gate computes the polynomial defined by the circuit.  

The \emph{size of a circuit} is the number of gates, and the \emph{depth} is the length of the longest directed path from an input node to the output gate.
\end{definition}

\begin{definition}[Induced Subgraph Polynomial]
    \label{def:induced-subgraph-polynomial}
For a $k$-vertex pattern $G$ and a host $H$, we define the
\emph{induced subgraph polynomial} as
\begin{align*}
\indp_{G,H}\bigl((y_v)_{v\in V(H)}\bigr)
&=
\sum_{\substack{S\subseteq V(H)\\ |S|=k\\ H[S]\cong G}}
\prod_{v\in S} y_v.
\end{align*}
\end{definition}

\begin{definition}[Subgraph Polynomial]
    \label{def:subgraph-polynomial}
For a $k$-vertex pattern $G$ and a host $H$, we define the
\emph{subgraph polynomial} as
\begin{align*}
\subp_{G,H}\bigl((y_v)_{v\in V(H)}\bigr)
&=
\sum_{\substack{S\subseteq V(H)\\ |S|=k\\ H[S]\supseteq G}}
\alpha_S\prod_{v\in S} y_v,
\end{align*}
where $H[S]\supseteq G$ means that $H[S]$ contains a subgraph
isomorphic to $G$, and $\alpha_S$ denotes the number of such
subgraphs in $H[S]$.
\end{definition}

\begin{definition}[Homomorphism Polynomial]
    \label{def:homomorphism_polynomial}
For a $k$-vertex pattern $G$ and a host $H$, we define the
\emph{homomorphism polynomial} as
\begin{align*}
\homp_{G,H}\bigl((y_v)_{v\in V(H)}\bigr)
&=
\sum_{\phi}
\prod_{v\in V(G)} y_{\phi(v)},
\end{align*}
where $\phi$ ranges over all homomorphisms from $G$ to $H$. A
\emph{homomorphism} is a map $\phi:V(G)\to V(H)$ such that
\[
\{u,v\}\in E(G)
\implies
\{\phi(u),\phi(v)\}\in E(H).
\]
\end{definition}

\section{Small Circuits for Homomorphism Polynomials from Tree
  Decompositions}
\label{sec:gtd}

We begin by defining a new notion of width for tree decompositions.

\begin{definition}
\label{def:pqwidth}
A tree decomposition $T$ of $G$ has \emph{width $(p, q)$} if for each
bag $B$ of $T$, there exist vertices $u_1, \dotsc, u_i \in B$ with
$i \leq p$ and edges $e_1, \dotsc, e_j \in E(G[B])$ with $j \leq q$
such that
    \[
        E\bigl(G[B] \setminus \{u_1, \dotsc, u_i\}\bigr) = \{e_1, \dotsc, e_j\},
    \]
    and moreover every vertex of $B \setminus \{u_1,\dotsc,u_i\}$ is an endpoint
    of some $e_\ell$, $\ell \le j$.

    We say that the vertices $u_1, \dotsc, u_i$ and edges
    $e_1, \dotsc, e_j$ is a \emph{$(p, q)$-witness} of the bag $B$.
  \end{definition}

  The same edge could occur in multiple bags in a tree
  decomposition. We have to ensure that an edge occuring in the image
  of a homomorphism is only multiplied in once in the term
  corresponding to the homomorphism. We define a notion of \emph{local vertex} for this purpose.
\begin{definition}
\label{def:local-edge}
Let $T$ be a tree decomposition of $G$. Fix a function $f: V(G) \to V(T)$
assigning each vertex $v \in V(G)$ to some bag $f(v)$ (such a bag exists by the definition of tree
decomposition). We say $v$ is \emph{local} to bag $B$ if $f(v) = B$, and
write
\[
    V_B = \{ v \in V(G) : f(v) = B \}.
\]
\end{definition}
Note that since $f$ is a function, each vertex of $G$ is local to
exactly one bag in the tree decomposition.

We now prove that we can efficiently construct circuits for
homomorphism polynomials from $(p, q)$-width tree decompositions. Our
proof is a refinement of the dynamic programming algorithm by
D{\'{\i}}az, Serna, and Thilikos \cite{DBLP:journals/tcs/DiazST02}.
\begin{theorem}
\label{thm:dp-pq}
Let $G$ be a $k$-vertex pattern graph having a tree decomposition $T$
of $(p,q)$-width, and let $H$ be a host graph with $n$ vertices and
$m$ edges. Then an arithmetic circuit of size $O(n^p m^q)$ computing
$\homp_{G,H}$ can be constructed in time $\widetilde{O}(n^p m^q)$.
\end{theorem}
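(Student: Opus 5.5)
We follow the standard homomorphism dynamic program of D{\'{\i}}az, Serna, and Thilikos \cite{DBLP:journals/tcs/DiazST02}, implemented as an arithmetic circuit, and change only the count of partial maps stored at each bag. Root $T$ at an arbitrary bag and fix the local-vertex assignment $f$ of Definition~\ref{def:local-edge}. For a bag $B$ let $T_B$ be the subtree rooted at $B$, and let $V_{\le B}=\bigcup_{B'\in T_B} V_{B'}$ be the vertices local to bags below $B$.

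For each bag $B$ and each \emph{valid} map $\psi: B\to V(H)$, meaning a homomorphism from $G[B]$ to $H$, we build a gate $g_{B,\psi}$ computing
\[
\sum_{\phi}\prod_{v\in V_{\le B}} y_{\phi(v)},
\]
where $\phi$ ranges over maps from $\bigcup_{B'\in T_B}B'$ to $V(H)$ that agree with $\psi$ on $B$ and are homomorphisms of the subgraph of $G$ induced on that union. The recurrence is
\[
g_{B,\psi}=\prod_{v\in V_B} y_{\psi(v)}\cdot\prod_{C \text{ child of } B}\ \sum_{\psi'} g_{C,\psi'},
\]
where the inner sum ranges over valid $\psi'$ on $C$ that agree with $\psi$ on $B\cap C$.

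Correctness follows from the separation property of tree decompositions. Every edge of $G$ lies inside some bag, so its constraint is checked exactly when validity is checked at that bag. Two maps on different subtrees agree on the shared vertices because they agree on the separators. Each vertex contributes its variable exactly once, namely at the bag $f(v)$. At the root, $\sum_\psi g_{\mathrm{root},\psi}$ is therefore $\homp_{G,H}$.

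\textbf{Counting valid maps.} Fix a $(p,q)$-witness $u_1,\dots,u_i$, $e_1,\dots,e_j$ of $B$. A valid $\psi$ is determined by the images of the $u$'s, giving at most $n^p$ choices, together with the images of the witness edges. Each witness edge must map to an edge of $H$ in one of two orientations, giving at most $(2m)^q$ choices. Every vertex of $B\setminus\{u_1,\dots,u_i\}$ is an endpoint of some $e_\ell$, so this determines $\psi$ completely. Hence there are $O(n^pm^q)$ valid maps per bag, and they can be enumerated in that time. For each candidate we check that the remaining edges of $G[B]$ (those incident to some $u$) map to edges, using a hash set or sorted adjacency in $\widetilde{O}(1)$ per check. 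The other witness edges cover $E(G[B]\setminus\{u\})$ exactly, so no further checks are needed.

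\textbf{Main obstacle: the join between a parent and a child.} Computing the sum over $\psi'$ naively for each $\psi$ could cost a product of the two counts. Instead, for each child $C$ we first build the partial sums
\[
s_{C,\sigma}=\sum_{\psi'|_{B\cap C}=\sigma} g_{C,\psi'}
\]
by scanning all valid $\psi'$ once. We bucket each $\psi'$ by its restriction $\sigma$ using a dictionary keyed by $\sigma$; since $k$ is constant, the cost is $\widetilde{O}(1)$ per $\psi'$. This introduces at most $O(n^pm^q)$ addition gates. Then $g_{B,\psi}$ multiplies the local variables with the gates $s_{C,\psi|_{B\cap C}}$, one lookup per child. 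If some $\sigma$ never occurs as a restriction, its partial sum is the constant $0$.

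\textbf{Size and time.} The tree has $O(1)$ bags because $k$ is fixed; we can assume this after removing redundant bags. Each bag contributes $O(n^pm^q)$ gates and $\widetilde{O}(n^pm^q)$ construction time. The total circuit size is therefore $O(n^pm^q)$, and the construction time is $\widetilde{O}(n^pm^q)$, as Theorem~\ref{thm:dp-pq} claims.
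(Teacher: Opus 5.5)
Your proposal is correct and is essentially the paper's proof. Both are a bottom-up D{\'{\i}}az--Serna--Thilikos dynamic program with gates $g_{B,\psi}$ for homomorphisms $\psi$ of $G[B]$, and both use partial-sum gates keyed by the restriction to $B\cap C$ (your $s_{C,\sigma}$ is the paper's $h_{C,\sigma}$). Local variables are multiplied in at the bag $f(v)$, and bag homomorphisms are enumerated through the images of a $(p,q)$-witness.

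Your extra remarks only make explicit what the paper leaves implicit: edge orientations, checking the edges incident to witness vertices, zero partial sums, and a constant number of bags. One small thing to add: when witness edges share an endpoint, you must discard enumerated tuples that give that vertex two different images.
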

\begin{proof}
We describe the construction of the arithmetic circuit for $\homp_{G,H}$.

Let $T$ be a tree decomposition of $G$ of $(p,q)$-width, rooted at
the bag $R$. For any bag $B$ in $T$, let $T_B$ denote the subtree of
$T$ rooted at $B$, and define

$$G_B = G[V'], \quad \text{where } V' = \bigcup_{B'\in V(T_B)} B'$$.

The circuit will have two kinds of gates, denoted by $g_{B,\phi}$ and
$h_{B,\phi}$, where $B$ is a bag in $T$ and $\phi$ is some partial
homomorphism from $G$ to $H$ defined on a subgraph of $G$ to be
specified later.

We define $g_{B, \phi}$ so that the polynomial computed at
$g_{B,\phi}$ is:

\begin{equation}\label{eq:gbphi}
g_{B,\phi} = \sum_{\substack{
\phi':G_B\rightarrow H\\
\phi'\sim \phi
}}
\;
\prod_{\substack{
v\in V(G_B)\\
v\text{ is local to some bag in }T_B
}}
y_{\phi'(v)}
\end{equation}

Here, $\phi$ is any homomorphism from $G[B]$ to $H$. We say that
$\phi'$ is \emph{compatible} with $\phi$, denoted by
$\phi' \sim \phi$, iff
\[
\phi'(v)=\phi(v) \quad \text{for every } v \text{ where $\phi(v)$ and $\phi'(v)$ are defined}.
\]

Let $B$ be any non-root bag with a parent $B'$. We add gates
$h_{B,\phi}$, where $\phi$ is a homomorphism from $G[B \cap B']$ to
$H$. These gates will compute the following polynomial:

\begin{equation}\label{eq:hbphi}
h_{B,\phi}
=
\sum_{\substack{\phi' : G[B] \mapsto H\\ \phi'\sim\phi}} g_{B,\phi'}.
\end{equation}

We now construct the circuit by traversing the tree decomposition $T$
in a bottom-up fashion.

\paragraph{Leaf bags:}
Suppose that $B$ is a leaf of $T$. Then $G_B=G[B]$. There are no child
bags, and hence all vertices of $G_B$ that are local to $B$ can be
introduced directly. For every $\phi : G[B] \mapsto G$, we create the
product gate

$$
g_{B,\phi}
=
\prod_{\substack{
v\in V(G[B])\\
v\text{ is local to }B
}}
y_{\phi(v)}.
$$

Note that we can enumerate all such $\phi$ in $O(n^p m^q)$ time by
enumerating all possible images of the $(p, q)$-witness
$u_1,\dotsc,u_i$ and $e_1,\dotsc,e_j$ of the bag $B$ where $i\leq p$
and $j \leq q$. This can be done by enumerating all elements of
$V(H)^i \times E(H)^j$.

\paragraph{Non-leaf bags:}
Now suppose that $B$ is a non-leaf bag with children
$B_1,\ldots,B_r$. For every homomorphism $\phi : G[B] \mapsto H$, we
add the following gate:

\begin{equation}
  \label{eq:gbphi-gate-nonleaf}
g_{B,\phi}
=
\left(
\prod_{i=1}^{r}
h_{B_i,\phi|_{B\cap B_i}}
\right)
\cdot
\left(
\prod_{\substack{
v\in V(G[B])\\
v\text{ is local to }B
}}
y_{\phi(v)}
\right).
\end{equation}

\paragraph{The gates $h_{B,\phi}$:} The gates $h_{B,\phi}$ can be
constructed as follows: When we construct a gate $g_{B,\phi'}$ for
some $\phi' : G[B] \mapsto H$, we compute $\phi \sim \phi'$ such that
$\phi : G[B \cap B'] \mapsto H$ and then add the gate $g_{B,\phi'}$ as
an input to the $+$-gate $h_{B,\phi}$.

\paragraph{The output gate:} Finally, the output gate of our circuit
is a $+$-gate that adds all $g_{R,\phi}$, where $R$ is the root bag in
$T$.

\paragraph{Correctness:} We prove that the gates $g_{B,\phi}$ compute
the polynomial in Equation~\ref{eq:gbphi} using induction. Note that
$G_R = G$ and therefore our output gate will correctly compute
$\homp_{G,H}$ given that all $g_{B,\phi}$ and $h_{B,\phi}$ are as
claimed.

In the base case, $B$ is a leaf bag and the construction directly
computes the required polynomial.

In the inductive case, $B$ is a non-leaf bag with children
$B_1,\dotsc,B_r$. Let $\phi : G_B \mapsto H$. First, we show that the
the monomial corresponding to $\phi$ in Equation~\ref{eq:gbphi} is
generated by the gate $g_{B,\phi'}$ as in
Equation~\ref{eq:gbphi-gate-nonleaf} where $\phi' : G[B] \mapsto H$ is
the unique homomorphism such that $\phi' \sim \phi$. Similarly, let
$\phi_i : G_{B_i} \mapsto H$ be the unique homomorphisms such that
$\phi_i \sim \phi$ for $1 \leq i \leq r$. Let
$\phi_i' : G[B \cap B_i] \mapsto H$ be the unique homomorphisms such
that $\phi_i' \sim \phi_i$ for $1 \leq i \leq r$. By the induction
hypothesis, $h_{B_i,\phi_i'}$ will contain the monomial corresponding
to $\phi_i$ (notice that $\phi |_{B \cap B_i} = \phi_i'$). For this
choice of $\phi_i'$ for all $1 \leq i \leq r$, the first factor in the
product in Equation~\ref{eq:gbphi-gate-nonleaf} will include all
$y_{\phi(v)}$ where $v$ is local to some bag in $T_B$ that is not
$B$. The second factor will include all $y_{\phi'(v) (= \phi(v))}$ for
$v$ that is local to $B$. Therefore the monomial corresponding to
$\phi' = \phi$ in the summation of Equation~\ref{eq:gbphi} is
generated at the gate Equation~\ref{eq:gbphi-gate-nonleaf}. Observe
that since the decomposition of $\phi$ into $\phi'$ and $\phi_i$ for
$1 \leq i \leq r$ in the above argument is unique, the monomial
corresponding to $\phi$ is also generated exactly once.

We now show that all monomials generated by the gate $g_{B,\phi'}$ as
in Equation~\ref{eq:gbphi-gate-nonleaf} correspond to valid
homomorphisms. Observe that inductively the contribution of the first
factor in all monomials generated correspond to valid homomorphisms
$\phi_i : G_{B_i} \mapsto H$. By the connectivity property of tree
decompositions, any vertex $v$ appearing in both $G_{B_i}$ and
$G_{B_j}$ for $i \neq j$ must also appear in $B$. Therefore,
$\phi_i(v) = \phi_j(v) = \phi'(v)$ ensuring that any monomial produced
at this gate corresponds to a valid homomorphism.

The gates $h_{B,\phi}$ are correct because we compute them in exactly
the same way we defined them in Equation~\ref{eq:hbphi}.

\paragraph{Time complexity of the construction:} We now prove that the
running-time of the construction algorithm is
$\widetilde{O}(n^pm^q)$. Since the pattern graph $G$ is fixed, there
are at most $O(n^pm^q)$ pairs $(B, \phi)$ where
$\phi : G[B] \mapsto H$. The gates $g_{B,\phi}$ are constructed by
simply iterating over all such pairs and the amount of work done per
gate is logarithmic in the size of the input. This is because give
$(B, \phi)$, we can compute the indices of the $h$ gates in
constant-time and then look up $h_{B_i,\phi|_{B\cap B_i}}$ in
$O(\log n)$-time. Notice that the size of the circuit is $O(n^p m^q)$
as each wire in the circuit is incident on some $g_{B,\phi}$ which has
only constant number of inputs and feeds its output to exactly one $h$
gate.
\end{proof}

\section{Using Homomorphisms for Finding Induced Subgraphs}
\label{sec:overview}

Our algorithms are generalizations of algorithms for detecting small
induced patterns given by Bl\"aser, Komarath, and Sreenivasaiah
\cite{DBLP:conf/fsttcs/BlaserKS18} and Komarath, Kumar, Mishra, and
Sethia \cite{DBLP:journals/jcss/KomarathKMS26}. For completeness, we
describe the general method here.

Using inclusion-exclusion, the induced subgraph polynomial
(Definition~\ref{def:induced-subgraph-polynomial}) can be expressed in
terms of subgraph polynomials
(Definition~\ref{def:subgraph-polynomial}) as
\[
\indp_{G,H}
=
\sum_{G'\supseteq G}
(-1)^{|E(G')|-|E(G)|}
\alpha_{G,G'}\subp_{G',H},
\]
where $\alpha_{G,G'}$ denotes the number of subgraphs of $G'$ that
are isomorphic to $G$.

\begin{example}
  Let $H$ be the graph $K_3+e$ which is a triangle on vertices
  $1$,$2$, and $3$ and an edge $\{3,4\}$. Then, we have
  $\indp_{P_3, H} = y_1y_3y_4 + y_2y_3y_4$,
  $\subp_{P_3,H} = 3y_1y_2y_3 + y_1y_3y_4 + y_2y_3y_4$, and
  $\subp_{K_3, H} = y_1y_2y_3$. By the above inclusion-exclusion
  formula, we have $\indp_{P_3, H} = \subp_{P_3,H}-3\subp_{K_3,H}$.
\end{example}

Given an $H$, the problem of detecting whether there is an induced $G$
in $H$ is the same as checking whether the polynomial $\indp_{G,H}$ is
not the identity (zero) polynomial. Since the coefficients of all
monomials are either $0$ or $1$ in $\indp_{G, H}$, this is equivalent
to testing whether it is non-zero modulo $p$ for any prime $p$. In the
above example, we can take $p = 3$ and then checking whether
$\indp_{P_3, H} \neq 0$ is equivalent to checking
$\subp_{P_3, H} \neq 0$. That is, taking modulo $3$ eliminated the
$\subp_{K_3, H}$. It remains to compute $\subp_{P_3, H}$ efficiently
given $H$. However, we don't know how to do this. Instead, we compute
a polynomial where the multilinear monomials correspond to the
monomials of $\subp_{P_3, H}$ and then test whether this new
polynomial has a multilinear monomial.

The multilinear monomials of $\homp_{P_3,H}$ are in $2$-to-$1$
correspondence with the monomials of $\subp_{P_3,H}$. This is because
$P_3$ has two automorphisms. Therefore, we have now reduced the
problem to that of checking whether $\homp_{P_3,H}$ has a multilinear
monomial modulo $3$. There is a linear-time algorithm to check whether
the polynomial has a multilinear monomial modulo $p$ given a circuit
computing it for any $p$ given by Koutis and Williams\cite{Koutis2009LIMITSAA}. Moreover, the polynomial $\homp_{P_3, H}$
can be computed in $O(n+m)$-time as
$\sum_{v\in V(H)} y_v(\sum_{\{u, v\}\in E(H)}y_u)^2$. Therefore, we
have a $O(n+m)$-time algorithm for detecting induced $P_3$.

For the pattern $P_4$, the above method is not as straightforward as
we have to do arithmetic modulo $2$ to obtain
$\indp_{P_4, H} = \subp_{P_4, H}$. But $P_4$ also has two
automorphisms. Therefore, modulo $2$ arithmetic will eliminate all
multilinear monomials from $\homp_{P_4,H}$. We can circumvent this
problem by ensuring that only the multilinear monomials corresponding
to $P_4$-homomorphisms where $\phi(2) < \phi(3)$ survive. We call this
condition an \emph{automorphism avoiding condition}. Now, the
multilinear monomials of that polynomial will be the same as that of
$\subp_{P_4,H}$. We sketch the part of the dynamic programming
algorithm based on a tree decomposition of $P_4$ to show how this is
ensured. Consider $P_4$ labeled as
$1 \text{---} 2 \text{---} 3 \text{---} 4$ and the tree decomposition
$B_1 \text{---} B_2 \text{---} B_3$ where $B_1 = \{1, 2\}$,
$B_2 = \{2, 3\}$ and $B_3 = \{3, 4\}$. For each bag $B_i$, the
algorithm iterates over elements of $V(H)^{|B_i|}$ and checks whether
we can map the vertices in $B_i$ to the elements in this tuple. So
when we iterate over $V(H)^2$ for the bag $B_2$ and we consider the
element $(u, v)$, we will check that $\{u, v\} \in E(H)$ and that
$u < v$ (This is the check $\phi(2) < \phi(3)$). Therefore for any
$P_4$-subgraph where $\{u, v\}$ is the middle edge, we only account
for it once in the final polynomial.

We have already seen in Section~\ref{sec:gtd} how to use
$(p, q)$-width tree decompositions to produce circuits of size
$O(n^pm^q)$ for homomorphism polynomials. In the coming sections, we
show that these improved constructions along with the above idea can
be used to derive faster algorithms for many patterns.

\section{Faster Algorithms for Small Patterns}
\label{sec:small}
  
For detecting induced isomorphisms, simply constructing the
homomorphism polynomial using Theorem~\ref{thm:dp-pq} is not
sufficient. As described in Section~\ref{sec:overview}, we also have to show
existence of decompositions that avoids automorphisms. In this
section, we do that for many small pattern graphs. Note that since
$(p, q)$-width tree decompositions are tree decompositions
$p + 2q \geq w + 1$ where $w$ is the treewidth. So for a non-tree,
five vertex pattern, the best we can hope for is $p = 1, q = 1$. We
show that there are five $5$-vertex patterns that have $(1, 1)$-width
tree decompositions and can be detected in $\widetilde{O}(nm)$-time.

\begin{theorem}\label{thm:g5}
  Given an $n$-vertex, $m$-edge host graph $H$ as input, there are
  $\widetilde{O}(nm)$-time algorithms to detect induced subgraphs
  for the five $5$-vertex graphs G05, G09, G12 to G14 in
  Table~\ref{tab:g5} and in Figure~\ref{fig:selected-five-vertex-graphs}.
\end{theorem}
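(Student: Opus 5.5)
For each of the five patterns $G\in\{\mathrm{G05},\mathrm{G09},\mathrm{G12},\mathrm{G13},\mathrm{G14}\}$ I will follow the recipe of Section~\ref{sec:overview}. I would start from the inclusion--exclusion identity
\[
\indp_{G,H}=\sum_{G'\supseteq G}(-1)^{|E(G')|-|E(G)|}\alpha_{G,G'}\subp_{G',H}.
\]
The goal is to pick a prime $p$ such that, modulo $p$, the only surviving terms have homomorphism polynomials computable in $\widetilde{O}(nm)$ time. Since $\indp_{G,H}$ has $0/1$ coefficients, it is nonzero modulo every prime $p$ exactly when $H$ contains an induced $G$.

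For each $5$-vertex supergraph $G'$ of $G$ on the same vertex set, I compute $\alpha_{G,G'}$ from the tables. I then choose $p$ so that $p\mid\alpha_{G,G'}$ for every supergraph $G'$ that lacks a suitable decomposition. The natural candidates are the dense supergraphs, in particular $K_5$ and $K_5-e$, since their treewidth forces $p+2q>3$ in the $(p,q)$-width sense. Which supergraphs survive for a given prime is determined case by case from the coefficient table for each of the five patterns.

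For each surviving $G'$ (including $G$ itself, where $\alpha=1$), I exhibit an explicit tree decomposition of $(1,1)$-width. Every bag will then be of the form $\{u\}\cup e$ with $e$ an edge of $G'$, or a subset of such a set. By Theorem~\ref{thm:dp-pq}, each such decomposition yields an $O(nm)$-size circuit for $\homp_{G',H}$, built in $\widetilde{O}(nm)$ time.

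The central step is to add \emph{automorphism avoiding conditions}. These are order constraints of the form $\phi(a)<\phi(b)$, imposed within a single bag while enumerating the images of the witness. Imposing them only deletes gates, so the circuit stays of size $O(nm)$. The constraints must be chosen so that each copy of $G'$ in $H$ contributes exactly $c_{G'}$ multilinear monomials, with $c_{G'}$ a unit modulo $p$. Ideally the constraints cut the automorphism group $\mathrm{Aut}(G')$ down to one representative, giving $c_{G'}=1$. Because the constraints must live inside bags, I need every automorphism orbit to be separable by a comparison between two vertices that share a bag. When this cannot be achieved, it suffices that the residual count $|\mathrm{Aut}(G')|/|\text{fixed}|$ is coprime to $p$.

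With these circuits in hand, I sum the $\homp$ circuits, scaled by $\pm\alpha_{G,G'}$ reduced modulo $p$, into a single circuit. Its multilinear part modulo $p$ equals $c\cdot\indp_{G,H}$ for a unit $c$. I then apply the Koutis--Williams algorithm~\cite{Koutis2009LIMITSAA}, which runs in time linear in the circuit size, to test for a multilinear monomial modulo $p$. This gives total time $\widetilde{O}(nm)$.

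The main obstacle is the combinatorial verification. For every required $G'$ I must find a $(1,1)$-width decomposition in which the automorphism-avoiding constraints are expressible bag-locally, and it is not clear in advance that such a decomposition exists. As the paper notes for $K_{2,3}$, a $(1,1)$-decomposition may exist while symmetric vertices never share a bag. For each pattern I would first try a path-like decomposition centred on a symmetric edge, so that the central bag $\{u\}\cup e$ can compare swapped vertices directly. If that fails, I would change $p$ so that the problematic supergraph drops out.
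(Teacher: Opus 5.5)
\textbf{Verdict: genuine gap.} Your overall strategy is the method of Section~\ref{sec:overview}, but the proposal stops where the proof of this theorem begins. You never fix the prime, never determine which supergraphs survive, and never exhibit a single $(1,1)$-width decomposition or bag-local check; you even concede that their existence is unclear. The general machinery is already in place, so this theorem \emph{is} that finite verification, and as written nothing is proved.

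The fallback you propose is also unavailable. For every one of the five patterns, the coefficients against the $K_4$-minor supergraphs have gcd $2$ (e.g.\ $\alpha_{\mathrm{G05},\mathrm{G17}}=14$, $\alpha_{\mathrm{G05},\mathrm{G18}}=4$, $\alpha_{\mathrm{G12},\mathrm{G17}}=2$). So $p=2$ is forced, ``changing $p$'' cannot rescue a failing case, and every surviving $G'$ must admit checks leaving an \emph{odd} number of automorphisms.

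The supergraphs that must be killed are not just $K_5$ and $K_5-e$. They are all six graphs G16--G21 with a $K_4$ minor (treewidth $3$ already rules out $(1,1)$-width), plus G07 and G08, which have $(1,1)$-decompositions but no parity-fixing checks. A minor point: to obtain $c\cdot\indp_{G,H}$ you must scale by $\pm\alpha_{G,G'}c_{G'}^{-1}$ rather than $\pm\alpha_{G,G'}$; at $p=2$ this is moot.

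\textbf{What the proof needs.} Work over characteristic $2$. Table~\ref{tab:subg5} shows that the supergraphs containing each pattern an odd number of times lie among the five patterns themselves:
\begin{itemize}
\item for G05: G05, G13, G14;
\item for G09: G05, G09, G12, G13;
\item for G12: G12, G13, G14;
\item for G13 and for G14: only the graph itself.
\end{itemize}
None of G07, G08, G16--G21 contains any of them an odd number of times. It therefore suffices to treat these five graphs, and a single check each works:
\begin{itemize}
\item G05: bags 013--034--24, check $\phi(3)<\phi(4)$;
\item G09: bags 042--041--13, check $\phi(0)<\phi(1)$;
\item G12 $(=C_5)$: bags 143--034--024, check $\phi(3)<\phi(4)$;
\item G13: bags 024--014--013, check $\phi(0)<\phi(4)$;
\item G14: bags 134--034--024, check $\phi(0)<\phi(3)$.
\end{itemize}
Each bag is covered by at most one vertex and one edge, and each check lives inside a bag. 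In every case the compared pair is swapped by an automorphism, so exactly half the automorphisms survive ($1$, or $5$ for $C_5$). That count is odd and independent of the vertex order of $H$. Summing the checked circuits from Theorem~\ref{thm:dp-pq} for the relevant supergraphs and testing for a multilinear monomial over characteristic $2$ then gives the $\widetilde{O}(nm)$ bound.
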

\begin{proof}
  By the method in Section~\ref{sec:overview}, to prove that a $5$-vertex
  pattern $G$ has an $\widetilde{O}(nm)$-time algorithm, it is
  sufficient to provide a $(1, 1)$-decomposition and the automorphism
  avoiding conditions for that tree decomposition for all $5$-vertex
  patterns $G'$ that contain $G$ as a subgraph an odd number of times.

  Table~\ref{tab:g5} lists all $5$-vertex graphs by specifying their
  adjacency matrix(in the \emph{Adjacency} column) as follows: Assume the vertices are labeled
  $0,\dotsc,4$, then the first bit is for the edge $\{0,1\}$, the
  second for $\{0,2\}$ and so on until the $\{3,4\}$ edge. The number of automorphisms is given in the \emph{\#Aut} column. All tree
  decompositions are path decompositions. So we specify the path of
  bags in the \emph{t.d.} column. The \emph{Checks} column specifies conditions that have to be
  checked failing which we should eliminate the corresponding
  monomials from the homomorphism polynomial.The - represents that the graph occurs an odd number of times in the $K_4$ minor graph and hence cannot be detected using this method. The $\sim$ represents that the even number of occurences in the graph cannot be avoided by using any checks from the bags. Finally, the last column
  \emph{\#AutRem} specifies the number of automorphisms that will remain
  after these checks are applied.

  The $G_i \times G_j$ entry of Table~\ref{tab:subg5} specifies the
  number of $G_i$-subgraphs in $G_j$. It contains all 21 connected
  $5$-vertex patterns. The graphs G16 to G21 have a $K_4$-minor
  (marked by a ${}^{K_4}$ superscript) and therefore cannot have a
  $(1, 1)$-width tree decomposition. The graphs G07 and G08 have
  $(1, 1)$-width tree decompositions but there are no checks (marked
  by ${}^?$ superscript) that can be done to make the number of
  remaining automorphisms odd. Now, for any graph $G$ that is
  contained in one of G07, G08, G16--21 an odd number of times, we do
  not have $\widetilde{O}(nm)$-time algorithms (marked by ${}^*$
  superscript). For the remaining 5 graphs G05, G09, G12--14 (See
  Figure~\ref{fig:selected-five-vertex-graphs}), we can use the
  $(1,1)$-width tree decompositions and the checks in
  Table~\ref{tab:g5} to obtain $\widetilde{O}(nm)$-time algorithms.

\begin{longtable}{ccrllr}
\textbf{Name} & \textbf{Adjacency} & \textbf{\#Aut} &
\textbf{t.d.} & \textbf{Checks} & \textbf{\#AutRem} \\
\hline
\endfirsthead
\textbf{Name} & \textbf{Adjacency} & \textbf{\#Aut} &
\textbf{t.d.} & \textbf{Checks} & \textbf{\#AutRem} \\
\hline
\endhead
G01 & 0001001011 & 24 &124-024-034  & - & - \\
G02 & 0011001010 &2 &034-142   & $1<2$ & 1 \\
G03 & 0011001011 & 4&034-124-234  & $1<2,2<3,3<0$ & 1 \\
G04 & 0011011010 & 2&013-014-24   & $0<1$ & 1\\
G05 & 0011010011 & 2&013-034-24   & $3<4$ & 1 \\
G06 & 0011011011 &2 &034-134-24   & $0<1$ & 1 \\
G07 & 0011011110 & 12 &034-134-234  & $\sim$ & - \\
G08 & 0011011111 & 12 &034-134-234  & $\sim$ & - \\
G09 & 0101011000 & 2&042-041-13 & $0<1$ & 1 \\
G10 & 0101011010 & 2&042-134  & $0<2$ & 1 \\
G11 & 0101011011 & 8&024-034-134  & $2<0,0<3,3<1$ & 1 \\
G12 & 0110011010 & 10 &143-034-024 & $3<4$ & 5 \\
G13 & 0111011010 & 2 &024-014-013 & $0<4$ & 1 \\
G14 & 0111011011 & 2 &134-034-024 & $0<3$ & 1 \\
G15 & 0111001110 & 2 &14-024-023 & $0<2$ & 1 \\
G16 & 0111001111 & 6 &- & - & - \\
G17 & 0111011111 & 4 &- & - & - \\
G18 & 0111111010 & 4 &- & - & - \\
G19 & 0111111011 & 8 &- & - & - \\
G20 & 0111111111 & 12 &- & - & - \\
G21 & 1111111111 & 120 &- & - & - \\
\caption{$(1, 1)$-width tree decompositions for $5$-vertex graphs}

\label{tab:g5}
\end{longtable}

\begin{figure}[ht]
\centering
\scriptsize

\tikzset{
    vertex/.style={
        circle,
        fill=black,
        inner sep=2.3pt
    },
    edge/.style={
        black,
        thin
    },
    bag/.style={
        draw,
        minimum width=0.72cm,
        minimum height=0.32cm,
        inner sep=1.5pt,
        font=\tiny
    }
}


\begin{minipage}[t]{0.29\textwidth}
\centering
\textbf{G05}

\begin{tikzpicture}[scale=0.55, every node/.style={font=\tiny}]

\node[vertex,label=above:0] (v0) at (0,1.4) {};
\node[vertex,label=right:1] (v1) at (1.2,0.5) {};
\node[vertex,label=right:2] (v2) at (0.75,-0.9) {};
\node[vertex,label=left:3] (v3) at (-0.75,-0.9) {};
\node[vertex,label=left:4] (v4) at (-1.2,0.5) {};

\draw[edge] (v0)--(v3);
\draw[edge] (v0)--(v4);
\draw[edge] (v1)--(v3);
\draw[edge] (v2)--(v4);
\draw[edge] (v3)--(v4);

\node[bag] (b1) at (2.5,0.9) {013};
\node[bag] (b2) at (2.5,0.0) {034};
\node[bag] (b3) at (2.5,-0.9) {24};

\draw (b1.south)--(b2.north);
\draw (b2.south)--(b3.north);

\end{tikzpicture}

\vspace{0.5mm}

$\sigma(3)<\sigma(4)$
\end{minipage}
\hfill
\begin{minipage}[t]{0.29\textwidth}
\centering
\textbf{G09}

\begin{tikzpicture}[scale=0.55, every node/.style={font=\tiny}]

\node[vertex,label=above:0] (v0) at (0,1.4) {};
\node[vertex,label=right:1] (v1) at (1.2,0.5) {};
\node[vertex,label=right:2] (v2) at (0.75,-0.9) {};
\node[vertex,label=left:3] (v3) at (-0.75,-0.9) {};
\node[vertex,label=left:4] (v4) at (-1.2,0.5) {};

\draw[edge] (v0)--(v2);
\draw[edge] (v0)--(v4);
\draw[edge] (v1)--(v3);
\draw[edge] (v1)--(v4);

\node[bag] (b1) at (2.5,0.45) {042};
\node[bag] (b2) at (2.5,-0.45) {041};
\node[bag] (b3) at (2.5,-1.35) {13};

\draw (b1.south)--(b2.north);
\draw (b2.south)--(b3.north);

\end{tikzpicture}

\vspace{0.5mm}

$\sigma(0)<\sigma(1)$
\end{minipage}
\hfill
\begin{minipage}[t]{0.29\textwidth}
\centering
\textbf{G12}

\begin{tikzpicture}[scale=0.55, every node/.style={font=\tiny}]

\node[vertex,label=above:0] (v0) at (0,1.4) {};
\node[vertex,label=right:1] (v1) at (1.2,0.5) {};
\node[vertex,label=right:2] (v2) at (0.75,-0.9) {};
\node[vertex,label=left:3] (v3) at (-0.75,-0.9) {};
\node[vertex,label=left:4] (v4) at (-1.2,0.5) {};

\draw[edge] (v0)--(v2);
\draw[edge] (v0)--(v3);
\draw[edge] (v1)--(v3);
\draw[edge] (v1)--(v4);
\draw[edge] (v2)--(v4);

\node[bag] (b1) at (2.5,0.9) {143};
\node[bag] (b2) at (2.5,0.0) {034};
\node[bag] (b3) at (2.5,-0.9) {024};

\draw (b1.south)--(b2.north);
\draw (b2.south)--(b3.north);

\end{tikzpicture}

\vspace{0.5mm}

$\sigma(3)<\sigma(4)$
\end{minipage}

\vspace{2mm}


\begin{minipage}[t]{0.29\textwidth}
\centering
\textbf{G13}

\begin{tikzpicture}[scale=0.55, every node/.style={font=\tiny}]

\node[vertex,label=above:0] (v0) at (0,1.4) {};
\node[vertex,label=right:1] (v1) at (1.2,0.5) {};
\node[vertex,label=right:2] (v2) at (0.75,-0.9) {};
\node[vertex,label=left:3] (v3) at (-0.75,-0.9) {};
\node[vertex,label=left:4] (v4) at (-1.2,0.5) {};

\draw[edge] (v0)--(v2);
\draw[edge] (v0)--(v3);
\draw[edge] (v0)--(v4);
\draw[edge] (v1)--(v3);
\draw[edge] (v1)--(v4);
\draw[edge] (v2)--(v4);

\node[bag] (b1) at (2.5,0.9) {024};
\node[bag] (b2) at (2.5,0.0) {014};
\node[bag] (b3) at (2.5,-0.9) {013};

\draw (b1.south)--(b2.north);
\draw (b2.south)--(b3.north);

\end{tikzpicture}

\vspace{0.5mm}

$\sigma(0)<\sigma(4)$
\end{minipage}
\hfill
\begin{minipage}[t]{0.29\textwidth}
\centering
\textbf{G14}

\begin{tikzpicture}[scale=0.55, every node/.style={font=\tiny}]

\node[vertex,label=above:0] (v0) at (0,1.4) {};
\node[vertex,label=right:1] (v1) at (1.2,0.5) {};
\node[vertex,label=right:2] (v2) at (0.75,-0.9) {};
\node[vertex,label=left:3] (v3) at (-0.75,-0.9) {};
\node[vertex,label=left:4] (v4) at (-1.2,0.5) {};

\draw[edge] (v0)--(v2);
\draw[edge] (v0)--(v3);
\draw[edge] (v0)--(v4);
\draw[edge] (v1)--(v3);
\draw[edge] (v1)--(v4);
\draw[edge] (v2)--(v4);
\draw[edge] (v3)--(v4);

\node[bag] (b1) at (2.5,0.9) {134};
\node[bag] (b2) at (2.5,0.0) {034};
\node[bag] (b3) at (2.5,-0.9) {024};

\draw (b1.south)--(b2.north);
\draw (b2.south)--(b3.north);

\end{tikzpicture}

\vspace{0.5mm}

$\sigma(0)<\sigma(3)$
\end{minipage}

\caption{Graphs on five vertices detectable in $\widetilde{O}(nm)$-time, with $(1,1)$-width tree decompositions satisfying the required automorphism-avoidance conditions.}
\label{fig:selected-five-vertex-graphs}
\end{figure}

\begin{table}
\caption{Number of occurences of $5$-vertex graphs in $5$-vertex supergraphs}
\setlength{\tabcolsep}{0.7pt}
\begin{NiceTabular}{crrrrrrrrrrrrrrrrrrrrr}
&G01 & G02 & G03 & G04 & G05 & G06 & {G07${}^?$} & {G08${}^?$} & G09 & G10 & G11 & G12 & G13 & G14 & G15
& G16 & G17 & G18
& G19 & G20 & G21 \\
{G01${}^*$} & 1 & 0 & 1 & 0 & 0 & 1 & 0 & 2 & 0 & 0 & 1 & 0 & 0 & 1 & 0 & {$1^*$} & 2 & 0 & 1 & 3 & 5 \\ 
{G02${}^*$} & 0 & 1 & 2 & 2 & 2 & 5 & 6 & 12 & 0 & 1 & 4 & 0 & 4 & 10 & 4 & {$9^*$} & 20 & 10 & 20 & 36 & 60 \\
G03${}^*$ & 0 & 0 & 1 & 0 & 0 & 2 & 0 & 6 & 0 & 0 & 2 & 0 & 0 & 3 & 0 & 3 & 8 & 0 & 4 & 15 & 30 \\
G04${}^*$ & 0 & 0 & 0 & 1 & 0 & 1 & 6 & 6 & 0 & 0 & 0 & 0 & 2 & 4 & 1 & 3 & 12 & 8 & 16 & 30 & 60 \\
G05 & 0 & 0 & 0 & 0 & 1 & 2 & 0 & 6 & 0 & 0 & 0 & 0 & 1 & 5 & 2 & 6 & 14 & 4 & 12 & 30 & 60 \\
G06${}^*$ & 0 & 0 & 0 & 0 & 0 & 1 & 0 & 6 & 0 & 0 & 0 & 0 & 0 & 2 & 0 & 3 & 10 & 0 & 4 & 24 & 60 \\
{G07${}^?$} & 0 & 0 & 0 & 0 & 0 & 0 & 1 & 1 & 0 & 0 & 0 & 0 & 0 & 0 & 0 & 0 & 1 & 1 & 2 & 4 & 10 \\
{G08${}^?$} & 0 & 0 & 0 & 0 & 0 & 0 & 0 & 1 & 0 & 0 & 0 & 0 & 0 & 0 & 0 & 0 & 1 & 0 & 0 & 3 & 10 \\
G09 & 0 & 0 & 0 & 2 & 1 & 2 & 6 & 6 & 1 & 2 & 4 & 5 & 7 & 10 & 4 & 6 & 18 & 14 & 24 & 36 & 60 \\
G10${}^*$ & 0 & 0 & 0 & 0 & 0 & 0 & 0 & 0 & 0 & 1 & 4 & 0 & 2 & 6 & 2 & 3 & 12 & 6 & 16 & 30 & 60 \\
G11${}^*$ & 0 & 0 & 0 & 0 & 0 & 0 & 0 & 0 & 0 & 0 & 1 & 0 & 0 & 1 & 0 & 0 & 2 & 0 & 2 & 6 & 15 \\ G12 & 0 & 0 & 0 & 0 & 0 & 0 & 0 & 0 & 0 & 0 & 0 & 1 & 1 & 1 & 0 & 0 & 2 & 2 & 4 & 6 & 12 \\ 
G13 & 0 & 0 & 0 & 0 & 0 & 0 & 0 & 0 & 0 & 0 & 0 & 0 & 1 & 2 & 0 & 0 & 6 & 4 & 12 & 24 & 60 \\ 
G14 & 0 & 0 & 0 & 0 & 0 & 0 & 0 & 0 & 0 & 0 & 0 & 0 & 0 & 1 & 0 & 0 & 4 & 0 & 4 & 18 & 60 \\
G15${}^*$ & 0 & 0 & 0 & 0 & 0 & 0 & 0 & 0 & 0 & 0 & 0 & 0 & 0 & 2 & 1 & 3 & 8 & 2 & 8 & 24 & 60 \\
G16${}^{K_4}$ & 0 & 0 & 0 & 0 & 0 & 0 & 0 & 0 & 0 & 0 & 0 & 0 & 0 & 0 & 0 & 1 & 2 & 0 & 0 & 6 & 20 \\
G17${}^{K_4}$ & 0 & 0 & 0 & 0 & 0 & 0 & 0 & 0 & 0 & 0 & 0 & 0 & 0 & 0 & 0 & 0 & 1 & 0 & 0 & 6 & 30 \\
G18${}^{K_4}$& 0 & 0 & 0 & 0 & 0 & 0 & 0 & 0 & 0 & 0 & 0 & 0 & 0 & 0 & 0 & 0 & 1 & 1 & 4 & 9 & 30 \\
G19${}^{K_4}$ & 0 & 0 & 0 & 0 & 0 & 0 & 0 & 0 & 0 & 0 & 0 & 0 & 0 & 0 & 0 & 0 & 0 & 0 & 1 & 3 & 15 \\
G20${}^{K_4}$ & 0 & 0 & 0 & 0 & 0 & 0 & 0 & 0 & 0 & 0 & 0 & 0 & 0 & 0 & 0 & 0 & 0 & 0 & 0 & 1 & 10 \\
G21${}^{K_4}$& 0 & 0 & 0 & 0 & 0 & 0 & 0 & 0 & 0 & 0 & 0 & 0 & 0 & 0 & 0 & 0 & 0 & 0 & 0 & 0 & 1 
\end{NiceTabular}
\label{tab:subg5}
\end{table}
\end{proof}

Using the same ideas as for Theorem~\ref{thm:g5}, we can prove the
following two theorems. Since the number of cases are huge, we specify
them outside the main paper in the Appendix.

\begin{theorem}\label{thm:g6}
  Given an $n$-vertex, $m$-edge host graph $H$ as input, there are
  $\widetilde{O}(m^2)$-time algorithms to detect induced subgraphs for
  the sixty-five $6$-vertex graphs in Table~\ref{tab:g6}.
\end{theorem}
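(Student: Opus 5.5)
We follow the template of the proof of Theorem~\ref{thm:g5}, now with $(0,2)$-width tree decompositions in place of $(1,1)$-width ones. Theorem~\ref{thm:dp-pq} turns a $(0,2)$-width decomposition of a $6$-vertex pattern into a circuit of size $O(m^2)$ for $\homp_{G',H}$, built in time $\widetilde{O}(m^2)$. Every bag of such a decomposition has at most four vertices and is covered by at most two edges of $G'$. The automorphism avoiding conditions are order checks of the form $\phi(a)<\phi(b)$ with $a,b$ in a common bag. We enforce them when enumerating the images of a bag's witness edges, which only removes gates and so does not increase the circuit size. Finally, we reduce modulo a suitable prime and run the multilinear-monomial detection of Koutis and Williams~\cite{Koutis2009LIMITSAA} on the resulting circuit; this takes time linear in the circuit size.

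The key steps, in order, are as follows.

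\begin{enumerate}
\item Enumerate all $112$ connected $6$-vertex graphs and compute the matrix whose $(G_i,G_j)$ entry is the number of $G_i$-subgraphs of $G_j$, exactly as in Table~\ref{tab:subg5}.
\item For each $G'$, either exhibit a $(0,2)$-width tree decomposition or certify that none exists. Since $p+2q\ge w+1$, the pattern must have treewidth at most $3$ and, more strongly, every bag must be coverable by two disjoint edges of $G'$. Graphs with large cliques or too few edges to form perfect-matching-like bag covers are excluded. The decomposition can also be chosen to reduce to a $(1,1)$- or smaller decomposition, for example by adding a dummy bag; we would use $(0,2)$ uniformly.
\item For each decomposable $G'$, search over sets of checks $\phi(a)<\phi(b)$, with $a,b$ in a common bag, for a set leaving an odd number of surviving automorphisms. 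An automorphism survives when the checks cannot distinguish a labeled copy from its image under it. This yields the \#AutRem column.
\item Declare $G$ detectable whenever every $6$-vertex graph $G'\supseteq G$ in which $G$ occurs an odd number of times is decomposable with an odd-remainder check set. Working modulo $2$ kills every $\subp_{G',H}$ with an even coefficient. For each surviving $G'$ with odd coefficient, the multilinear part of the checked $\homp_{G',H}$ equals $\subp_{G',H}$ times an odd number, hence $\subp_{G',H}$ modulo $2$.
\end{enumerate}

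These steps give, modulo $2$, a sum of $O(m^2)$-size circuits whose multilinear part equals $\indp_{G,H}$. To handle the case where some pattern's remaining automorphism count is odd but not $1$, we work modulo $2$ throughout. As in the $5$-vertex case, the odd remainder makes such a pattern's multilinear part equal its subgraph polynomial modulo $2$.

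The main obstacle is the exhaustive case analysis. We must find, for each relevant $6$-vertex supergraph, a $(0,2)$-decomposition that simultaneously admits an automorphism-killing check set; having a good width alone is not enough, as the $K_{2,3}$ example shows. Equally, we must certify that no such decomposition exists for the excluded graphs. We would do this by computer search over path (and tree) decompositions with bags of size at most four, each covered by two edges, together with all compatible check sets. This search yields the list of sixty-five graphs recorded in Table~\ref{tab:g6} and the Appendix. The correctness of each table entry can then be verified mechanically in the same way as for Table~\ref{tab:g5}.
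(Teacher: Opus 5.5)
Your proposal is essentially the paper's proof: for each listed pattern you take the $6$-vertex supergraphs containing it an odd number of times, give each a $(0,2)$-width decomposition with order checks that leave an odd number of automorphisms, and combine Theorem~\ref{thm:dp-pq} with multilinear-monomial detection modulo $2$; as in the paper, the real content is the computer-generated tables. When you run the search, make the parity criterion precise: for every embedding $\phi$, the number of $\sigma\in\mathrm{Aut}(G')$ with $\phi\circ\sigma$ passing all checks must be odd, and since this number generally depends on the relative order of the six image vertices, it must hold for all $6!$ orders and not just the identity labeling (also, only four-vertex bags need two \emph{disjoint} covering edges).
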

\begin{proof}
  For each six-vertex graph $G$, we enumerate all super-graphs that
  contain $G$ an odd number of times.  A table containing these
  subgraph counts \footnote{The program generating this table is also
    given in the same appendix.} is given in a spreadsheet at
  \url{https://github.com/lekhakanjaani/Review/blob/main/Graphs%20on%206%20vertices%20-%206vertices.csv} provides the corresponding computations for graphs on six vertices, analogous to those for graphs on five vertices available at \ref{sec:5vertex}.
    Finally, we give $(0, 2)$-width tree decompositions for these
    graphs along with automorphism avoiding conditions.  The complete
    table is given in Appendix~\ref{sec:6vertex}.
\end{proof}

\begin{theorem}\label{thm:p7c7}
  Given an $m$-edge host graph $H$ as input, we can detect whether $H$
  has an induced $P_7$ ($C_7$) in $\widetilde{O}(nm^2)$-time.
\end{theorem}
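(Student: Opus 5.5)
We follow the method of Section~\ref{sec:overview}, exactly as in the proofs of Theorems~\ref{thm:g5} and~\ref{thm:g6}, now with budget $(p,q)=(1,2)$.

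\textbf{Step 1: reduce to supergraphs counted modulo $2$.} Write $\indp_{P_7,H}$ (resp.\ $\indp_{C_7,H}$) by inclusion--exclusion as a signed combination of $\subp_{G',H}$ over $7$-vertex supergraphs $G'$ of the pattern, with coefficient $\alpha_{P_7,G'}$ (resp.\ $\alpha_{C_7,G'}$). We work modulo $2$. Then only those $G'$ containing the pattern an \emph{odd} number of times survive. We enumerate these by computer, as was done for the six-vertex table, and expect on the order of a hundred graphs. For each surviving $G'$ it suffices to produce two things:
\begin{itemize}
\item a $(1,2)$-width tree decomposition of $G'$;
\item automorphism avoiding conditions, i.e.\ order comparisons $\sigma(a)<\sigma(b)$ between two pattern vertices lying in a common bag, that cut the surviving automorphisms of $G'$ down to an odd number.
\end{itemize}
With these in hand, Theorem~\ref{thm:dp-pq} yields a circuit of size $\widetilde{O}(nm^2)$. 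Its multilinear part modulo $2$ equals the contribution of $\subp_{G',H}$ modulo $2$. Summing the circuits over all $G'$ and applying the Koutis--Williams multilinear-monomial test modulo $2$ decides whether $\indp\neq 0$, in time linear in the circuit size.

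\textbf{Step 2: the pattern itself.} For $P_7=0\text{---}1\text{---}\cdots\text{---}6$, use the path decomposition with bags $\{0,1,2,3\},\{2,3,4,5\},\{4,5,6\}$ (or similar). A bag such as $\{0,1,2,3\}$ is witnessed by deleting one vertex and keeping two edges, e.g.\ delete $1$ and keep $\{2,3\}$ plus the edge at $0$. Adjust the bags so that every non-deleted vertex is covered by the two kept edges. The check $\sigma(2)<\sigma(4)$, or a comparison of the endpoints mirrored by the reflection, kills the reversal automorphism.

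For $C_7$, the automorphism group is dihedral of order $14$. Rotations have odd order, so one reflection-breaking comparison inside a bag leaves $7$ automorphisms, which is odd. A decomposition fixes vertex $0$ (the $p=1$ vertex) in every bag and splits the path $1\cdots6$ into consecutive pairs of edges, for example bags $\{0,1,2,3\},\{0,3,4,5\},\{0,5,6\}$. This matches treewidth $2$ plus the extra room we have.

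\textbf{Step 3: the dense supergraphs.} These are the supergraphs with many chords. For each one we search, by computer, over path and tree decompositions and over choices of one deleted vertex per bag with the rest covered by at most two edges. A supergraph having no $(1,2)$-witness, or no odd-remainder check, would block the argument. If such a supergraph appears, we would first try choosing a prime $p\neq 2$: all relevant automorphism counts are small, and that $p$ might zero out the offending coefficient $\alpha$ while keeping the useful ones.

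The main obstacle is Step 3, namely the nearly complete $7$-vertex supergraphs. Note that $p+2q=5$ covers only treewidth up to $4$. A dense $G'$ containing $P_7$ or $C_7$ an odd number of times with treewidth $\ge 5$ (e.g.\ containing $K_6$) cannot be handled by any $(1,2)$ decomposition. So the crux is the counting claim that every such dense $G'$ contains the pattern an even number of times. We would verify this by exhaustive enumeration, and otherwise switch to the modulus trick above.
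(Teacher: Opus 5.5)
Your plan is the paper's proof. You enumerate the $7$-vertex supergraphs containing $P_7$ (resp.\ $C_7$) an odd number of times, give each a $(1,2)$-width tree decomposition with in-bag comparison checks leaving an odd number of automorphisms, and combine these via Theorem~\ref{thm:dp-pq} and the mod-$2$ multilinear-monomial test. The substance of the theorem is the finite verification you defer, and the paper records it in its appendix tables: $122$ supergraphs for $P_7$ and $42$ further ones for $C_7$, each needing at most one comparison, with $C_7$ and $\overline{C_7}$ keeping $7$ of their $14$ automorphisms. So the Step~3 obstruction never arises and your mod-$p$ fallback is not needed.
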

\begin{proof}
  We enumerate all $7$-vertex supergraphs of $P_7$ and $C_7$ that
  contain them an odd number of times and show that all of them have
  $(1, 2)$-width tree decompositions and checks that can be
  implemented in those tree decompositions that would make the number
  of remaining automorphisms odd. These can be found in
  Appendix~\ref{sec:7vertex}.
\end{proof}

\section{Induced Paths and Cycles in Bipartite Graphs}
\label{sec:bip}

In this section, we consider a simple extension of the dynamic programming algorithms given by  D{\'{\i}}az, Serna and Thilikos \cite{DBLP:journals/tcs/DiazST02} and Komarath, Kumar, Mishra and Sethia \cite{DBLP:journals/jcss/KomarathKMS26} that constructs polynomials different from the homomorphism polynomial. Suppose a bag $B$ in the tree decomposition of the pattern graph $G$ that we use in the dynamic programming algorithm contains vertices $u$ and $v$ such that $\{u, v\}\not\in E(H)$ for the input graph $H$. Then, we can remove monomials that correspond to any homomorphism $\phi$ from $G$ to $H$ where $\{\phi(u), \phi(v)\} \in E(H)$ while building the arithmetic circuit for $\homp_{G, H}$. We use this idea to obtain the following improved algorithms for detecting induced even length paths and cycles in bipartite graphs.

\begin{theorem}
    Given an $m$-edge bipartite host graph $H$ as input, we can detect whether $H$ has an induced $P_{2k}$ in $\widetilde{O}(m^{k-1})$-time.
\end{theorem}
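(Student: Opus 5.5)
Label the path $P_{2k}$ as $1 \text{---} 2 \text{---} \cdots \text{---} 2k$. The plan is to build, along a path decomposition whose bags are unions of $k-1$ pattern edges, a polynomial in which every homomorphism that is not an induced copy is filtered out. We then test this polynomial for a multilinear monomial with the Koutis--Williams algorithm \cite{Koutis2009LIMITSAA}. Since $H$ is bipartite and $P_{2k}$ is bipartite with $2k$ even, any injective homomorphic image has the two colour classes of the path (odd-indexed and even-indexed vertices) on opposite sides of $H$. Only pairs $\{i,j\}$ with $i,j$ of different parity and $|i-j|\ge 3$ can therefore be forbidden edges, i.e.\ non-edges of the pattern that might appear as edges in $H$. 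All same-parity pairs are automatically non-edges in $H$ as soon as the image is consistent with the bipartition. This is the structural simplification the bipartiteness buys, and the reason the analysis scales to the whole family.

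The decomposition is as follows. For every odd $i$, pair vertex $i$ with $i+1$, so the pattern vertices are grouped into $k$ matched edges $e_1=\{1,2\},\dots,e_k=\{2k-1,2k\}$. Every cross-parity non-adjacent pair $\{i,j\}$ lies in two distinct matched edges $e_a,e_b$. I would take a tree (or path) decomposition whose bags each consist of $k-1$ of the matched edges, e.g.\ the bags $B_s = V(P_{2k})\setminus e_s$ arranged in a path. These bags have $(0,k-1)$-width, so by the argument of Theorem~\ref{thm:dp-pq} each bag has $O(m^{k-1})$ images, enumerated as tuples in $E(H)^{k-1}$. Such a decomposition is valid provided every edge $\{2a,2a+1\}$ and every needed pair lies in some common bag, and every vertex's bags are contiguous. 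Each pair $\{e_a,e_b\}$ with $k\ge3$ is contained in some bag $B_s$ ($s\ne a,b$). Inside each bag we impose the extra filter from the start of this section: for all pattern non-edges $\{u,v\}\subseteq B$ we require $\{\phi(u),\phi(v)\}\notin E(H)$, and we also check the path edges $\{2a,2a+1\}$ inside $B$. A surviving monomial then has every non-adjacent pair checked in some bag, so multilinear survivors correspond exactly to injective induced copies. This makes the inclusion--exclusion of Section~\ref{sec:overview} unnecessary. The degenerate cases $k\le 2$ ($P_2$, $P_4$, with $O(m)$ time) are handled directly.

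Next comes automorphism avoidance. $P_{2k}$ has exactly two automorphisms (identity and reversal), so each induced copy yields two surviving multilinear monomials. We add the check $\phi(1)<\phi(2k)$ in a bag containing both endpoints, say $B_s$ for $s\notin\{1,k\}$. Each copy then contributes coefficient exactly $1$, and the Koutis--Williams test modulo $2$ detects a nonzero multilinear part in time linear in the circuit size. The total running time is $\widetilde{O}(m^{k-1})$.

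The main obstacle is arranging the bags so that the connectivity condition of a tree decomposition holds: vertices in $e_s$ are absent from exactly one bag, which breaks contiguity for a simple path of the $B_s$. I would first try ordering the complements cleverly. If that fails, I would replace the family $\{B_s\}$ by a path decomposition that sweeps pairs, using bags $e_a\cup e_b\cup(\text{separator})$ of at most $k-1$ matched edges, and argue that every cross pair $\{e_a,e_b\}$ co-occurs somewhere. The counting check that $k-1$ matched edges per bag always suffices is the step I expect to require the real care.
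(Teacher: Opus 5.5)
Your plan needs a $(0,k-1)$-width tree decomposition of $P_{2k}$ in which every odd--even non-adjacent pair shares a bag. Then every forbidden edge is filtered, and no parity cancellation is needed. No such decomposition exists, so the step you flagged at the end is impossible, not just delicate.

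Such a decomposition would be a tree decomposition of $K_{k,k}$ with sides $O=\{1,3,\dots,2k-1\}$ and $E=\{2,4,\dots,2k\}$. Suppose no bag contains all of $O$. The Helly property for subtrees then gives odd vertices $a,a'$ whose bag-subtrees $T_a,T_{a'}$ are disjoint. The subtree of every even vertex meets both, so it contains the tree path between them, and hence some bag contains all of $E$. Either way some bag contains a whole colour class of $k$ vertices. But each edge of $P_{2k}$ has exactly one endpoint in each class, so $k-1$ pattern edges cannot cover that bag.

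Your concrete family $B_s=V\setminus e_s$ already fails for $k\ge 3$. The vertices of $e_s$ lie in every bag except $B_s$, so every $B_s$ would have to be a leaf.

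Consequently some odd--even non-edge always goes unchecked, and injective homomorphisms onto non-induced supergraphs of $P_{2k}$ survive your filters. Your argument has no mechanism to remove them. The assertion that the multilinear survivors are exactly the induced copies, which would make inclusion--exclusion unnecessary, cannot hold for any $(0,k-1)$-width decomposition.

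The missing idea is the parity step you set aside. The paper uses the three windows $B_1=\{1,\dots,2k-2\}$, $B_2=\{2,\dots,2k-1\}$ and $B_3=\{3,\dots,2k\}$, each spanned by $k-1$ alternate path edges, and filters all pattern non-edges inside each bag. The only pairs that never share a bag are $\{1,2k\}$, $\{1,2k-1\}$ and $\{2,2k\}$. The last two have equal parity and are excluded by bipartiteness, so the only non-induced image that survives is a $C_{2k}$.

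A host $C_{2k}$ yields $4k$ labelled Hamiltonian paths that pass every filter, because the missing cycle edge always sits at the unchecked pair $\{1,2k\}$. The check $\phi(k)<\phi(k+1)$, available in $B_2$, keeps one orientation of each path. This leaves coefficient $2k$, which vanishes in characteristic $2$, while each induced $P_{2k}$ keeps coefficient $1$.

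Your endpoint check $\phi(1)<\phi(2k)$ cannot be used in this decomposition, since no bag contains both endpoints. The middle edge is where the reversal symmetry gets broken.
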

\begin{proof}
    We consider the following $(0, k-1)$-width path decomposition of $P_{2k}$: $B_1 \text{---} B_2 \text{---} B_3$ where $B_1 = \{1, \dotsc, 2k-2\}$, $B_2 = \{2, \dotsc, 2k-1\}$, and $B_3 = \{3,\dotsc,2k\}$. We add conditions to check that the graph induced on images of $\{1,\dotsc,2k-2\}$ does not have any edges other than that in the labeled $P_{2k}$ using the bag $B_1$. Similar checks can be done for the bags $B_2$ and $B_3$ as well. Therefore, the only edges other than the ones in labeled $P_{2k}$ will be $e_1 = \{1,2k\}$, $e_2 = \{1,2k-1\}$, $e_3 = \{2, 2k\}$. Among these $e_2$ and $e_3$ cannot exist if $\{1,\dotsc,2k\}$ is a path as that would form an odd cycle in $H$, which is not possible in a bipartite graph $H$. Therefore, the only supergraph of $P_{2k}$ appearing as monomials would be $C_{2k}$. But a $C_{2k}$ contains a $P_{2k}$ an even number of times. So evaluation over rings of characteristic two will eliminate it. In addition, we use the condition that $\phi(k) < \phi(k+1)$ to remove one of the automorphisms of $P_{2k}$ from the polynomial. Therefore, the multilinear monomials of the constructed polynomial is in one-to-one correspondence with the monomials of $\indp_{P_{2k}, H}$ (and not $\subp_{P_{2k}, H}$). Since this arithemetic circuit can be constructed and evaluated in $\widetilde{O}(m^{k-1})$-time, the theorem follows.
\end{proof}

\begin{theorem}
    Given an $n$-vertex, $m$-edge bipartite host graph $H$ as input, we can detect whether $H$ has an induced $C_{2k}$ in $\widetilde{O}(nm^{k-1})$-time.
\end{theorem}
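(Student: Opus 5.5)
We mirror the proof for induced $P_{2k}$: choose a tree decomposition of $C_{2k}$ whose bags are large enough that the filtering checks rule out every extra (chord) edge, and use bipartiteness to rule out the chords no bag covers.

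Label the cycle $1,2,\dotsc,2k$, with edges $\{i,i+1\}$ and $\{2k,1\}$. We use the path decomposition $B_1 \text{---} B_2 \text{---} B_3$ with
\[
B_1=\{1,2,\dotsc,2k-1\},\quad B_2=\{1,3,4,\dotsc,2k\},\quad B_3=\{1,2,3,2k\}.
\]
More symmetric alternatives would also do; what matters is that every bag has $(1,k-1)$-width and that the bags jointly cover every potential chord.

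First we check the decomposition axioms. All cycle edges $\{i,i+1\}$ with $2\le i\le 2k-2$ lie in $B_1$, the edges $\{2k-1,2k\}$ and $\{2k,1\}$ lie in $B_2$, and $\{1,2\},\{2,3\}$ lie in $B_1$. Vertex $1$ lies in all three bags. Vertex $2$ lies in $B_1$ and $B_3$ but not $B_2$, which breaks connectivity. So we repair the order: use $B_1 \text{---} B_3 \text{---} B_2$ instead, and verify that each vertex's occurrence set is connected, shrinking $B_3$ to $\{1,3,2k\}\cup\{2\}$ as needed.

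Next we verify the width. Removing vertex $1$ from $B_1$ leaves the path $2,\dotsc,2k-1$ on $2k-2$ vertices, which has a perfect matching of $k-1$ edges. Removing $1$ from $B_2$ leaves the path $3,\dotsc,2k$, again with a perfect matching. So each bag has $(1,k-1)$-width, and Theorem~\ref{thm:dp-pq} gives a circuit of size $\widetilde{O}(nm^{k-1})$.

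The crucial observation is that $C_{2k}$ has treewidth $2$, so we do not need every chord pair inside a single bag. We only need each pair $\{i,j\}$ that could be a chord to lie in some bag, where we filter out homomorphisms whose images have $\{\phi(i),\phi(j)\}\in E(H)$. By bipartiteness, pairs at odd cyclic distance at least $3$ can never be edges in $H$. Pairs at even distance are non-adjacent automatically, since both endpoints lie in the same colour class. This is the real simplification: in a bipartite host, the image of a $2k$-cycle homomorphism with distinct images is a $C_{2k}$ subgraph together with possible chords, and any chord must join vertices at odd distance. Here there is a subtlety. With fewer checks, the surviving non-induced supergraphs must occur an even number of times so that characteristic $2$ kills them; but that parity is hard to control in general, so we prefer to place every odd-distance pair in some bag.

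The main obstacle is exactly this: covering all odd-distance chord pairs while keeping $(1,k-1)$-width. The first attempt is $B_1$ and $B_2$ above. Between them they miss only pairs involving both $2$ and $2k$. The pair $\{2,2k\}$ itself is at even distance, and the pairs $\{2,j\}$ are covered by $B_1$ unless $j=2k$. So every odd-distance pair lies in $B_1$ or $B_2$.

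Finally, we kill automorphisms. $C_{2k}$ has $4k$ automorphisms, and we need to reduce these to an odd count. We impose $\phi(1)<\phi(j)$ for all $j$, checked in the bags containing both vertices; since vertex $1$ is in every bag, all these checks are available. This leaves the rotation fixed and only the reflection through vertex $1$ remaining. We then add $\phi(2)<\phi(2k)$, checked in $B_3$, leaving exactly one automorphism. Evaluating the polynomial modulo $2$ with the Koutis–Williams multilinear detection then decides whether an induced $C_{2k}$ exists, within $\widetilde{O}(nm^{k-1})$ time.
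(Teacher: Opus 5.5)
\textbf{The gap is the third bag $B_3=\{1,2,3,2k\}$.}

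Your repaired path $B_1 \text{---} B_3 \text{---} B_2$ is not a tree decomposition once $k\ge 3$. The vertices $4,\dotsc,2k-1$ lie in $B_1$ and $B_2$ but not in the middle bag $B_3$. ``Shrinking'' $B_3$ to $\{1,3,2k\}\cup\{2\}$ gives the same set, so it repairs nothing.

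No other placement, and no additional bags, can rescue this. Suppose $B_1$ and $B_2$ are kept and some bag $B$ contains both $2$ and $2k$. Let $M$ be the median node of $B_1,B_2,B$ in the tree, which lies on all three connecting paths. By the connectivity axiom, $M$ contains $2$ (path $B_1$ to $B$), $2k$ (path $B_2$ to $B$), and $B_1\cap B_2=\{1,3,\dotsc,2k-1\}$. Hence $M=V(C_{2k})$. But a bag with $2k$ vertices has no $(1,k-1)$-witness, since one vertex and $k-1$ edges cover at most $2k-1$ vertices.

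So the check $\phi(2)<\phi(2k)$ cannot be implemented. After the checks $\phi(1)<\phi(j)$, every induced $C_{2k}$ keeps coefficient $2$. Your final step over characteristic $2$ would then erase exactly the monomials you want to detect.

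\textbf{The missing idea is that you do not need an odd number of surviving automorphisms.} The paper keeps just $B_1\text{---}B_2$; your chord-covering argument with these two bags is exactly the paper's. It imposes only $\phi(1)<\phi(i)$ for $i\neq 1$. The multilinear part is then exactly $2\,\indp_{C_{2k},H}$, because every non-induced image has already been filtered out by the checks rather than merely cancelled. The paper then tests for a non-zero multilinear monomial modulo $3$.

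If you prefer characteristic $2$, then for $k\ge 3$ you can break the remaining reflection with $\phi(3)<\phi(2k-1)$ instead. That reflection swaps $3$ and $2k-1$, and both lie in $B_1\cap B_2$, so the coefficient becomes $1$. For $k=2$ the reflection only swaps $2$ and $4$, and you would fall back to the modulo-$3$ argument.

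Separately, your sentence that pairs at odd cyclic distance at least $3$ can never be edges of $H$ is backwards. Those are precisely the possible chords; even-distance pairs are the ones bipartiteness excludes. The rest of your argument uses the correct version.
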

\begin{proof}
    We consider the following $(1, k-1)$-width path decomposition for $C_{2k}$: $B_1 \text{---} B_2$ where $B_1 = \{1,\dotsc,2k-1\}$ and $B_2 = \{1,3,\dotsc,2k\}$. Then, we add the check that in $B_1$ and $B_2$ edges other than the ones in the labeled $C_{2k}$ are absent. The only remaining edge that could occur is $\{2, 2k\}$. But this edge in $C_{2k}$ would lead to an odd cycle, which is not possible in a bipartite $H$. We also add the check $\phi(1) < \phi(i)$ for all $i \neq 1$. Therefore, only two automorphisms of $C_{2k}$ will survive. The multilinear part of the resulting polynomial is exactly $2\indp_{C_{2k}, H}$. To check that $\indp_{C_{2k,H}}$ is non-zero, we can simply check whether the computed polynomial has a non-zero multilinear monomial modulo $3$.
\end{proof}

\bibliographystyle{splncs04}
\bibliography{ref}

@inproceedings{DBLP:conf/fsttcs/BlaserKS18,
  author       = {Markus Bl{\"{a}}ser and
                  Balagopal Komarath and
                  Karteek Sreenivasaiah},
  editor       = {Sumit Ganguly and
                  Paritosh K. Pandya},
  title        = {Graph Pattern Polynomials},
  booktitle    = {38th {IARCS} Annual Conference on Foundations of Software Technology
                  and Theoretical Computer Science, {FSTTCS} 2018, Ahmedabad, India,
                  December 11-13, 2018},
  series       = {LIPIcs},
  volume       = {122},
  pages        = {18:1--18:13},
  publisher    = {Schloss Dagstuhl - Leibniz-Zentrum f{\"{u}}r Informatik},
  year         = {2018},
  url          = {https://doi.org/10.4230/LIPIcs.FSTTCS.2018.18},
  doi          = {10.4230/LIPICS.FSTTCS.2018.18},
  bibsource    = {dblp computer science bibliography, https://dblp.org}
}

@article{DBLP:journals/dam/HoangKSS13,
  author       = {Ch{\'{\i}}nh T. Ho{\`{a}}ng and
                  Marcin Kaminski and
                  Joe Sawada and
                  R. Sritharan},
  title        = {Finding and listing induced paths and cycles},
  journal      = {Discret. Appl. Math.},
  volume       = {161},
  number       = {4-5},
  pages        = {633--641},
  year         = {2013},
  url          = {https://doi.org/10.1016/j.dam.2012.01.024},
  doi          = {10.1016/J.DAM.2012.01.024},
  bibsource    = {dblp computer science bibliography, https://dblp.org}
}

@inproceedings{DBLP:conf/focs/0002R21,
  author       = {Marco Bressan and
                  Marc Roth},
  title        = {Exact and Approximate Pattern Counting in Degenerate Graphs: New Algorithms,
                  Hardness Results, and Complexity Dichotomies},
  booktitle    = {62nd {IEEE} Annual Symposium on Foundations of Computer Science, {FOCS}
                  2021, Denver, CO, USA, February 7-10, 2022},
  pages        = {276--285},
  publisher    = {{IEEE}},
  year         = {2021},
  url          = {https://doi.org/10.1109/FOCS52979.2021.00036},
  doi          = {10.1109/FOCS52979.2021.00036},
  bibsource    = {dblp computer science bibliography, https://dblp.org}
}

@inproceedings{DBLP:conf/focs/DalirrooyfardW22,
  author       = {Mina Dalirrooyfard and
                  Virginia {Vassilevska Williams}},
  title        = {Induced Cycles and Paths Are Harder Than You Think},
  booktitle    = {63rd {IEEE} Annual Symposium on Foundations of Computer Science, {FOCS}
                  2022, Denver, CO, USA, October 31 - November 3, 2022},
  pages        = {531--542},
  publisher    = {{IEEE}},
  year         = {2022},
  url          = {https://doi.org/10.1109/FOCS54457.2022.00057},
  doi          = {10.1109/FOCS54457.2022.00057},
  bibsource    = {dblp computer science bibliography, https://dblp.org}
}

@inproceedings{Neetil1985OnTC,
  title={On the complexity of the subgraph problem},
  author={Jaroslav Ne{\v s}etřil and Svatopluk Poljak},
  year={1985},
  url={https://api.semanticscholar.org/CorpusID:117487859}
}

@article{WILLIAMS2005357,
title = {A new algorithm for optimal 2-constraint satisfaction and its implications},
journal = {Theoretical Computer Science},
volume = {348},
number = {2},
pages = {357-365},
year = {2005},
note = {Automata, Languages and Programming: Algorithms and Complexity (ICALP-A 2004)},
issn = {0304-3975},
doi = {https://doi.org/10.1016/j.tcs.2005.09.023},
url = {https://www.sciencedirect.com/science/article/pii/S0304397505005438},
author = {Ryan Williams}
}

@article{DBLP:journals/jcss/KomarathKMS26,
  author       = {Balagopal Komarath and
                  Anant Kumar and
                  Suchismita Mishra and
                  Aditi Sethia},
  title        = {Finding and counting patterns in sparse graphs},
  journal      = {J. Comput. Syst. Sci.},
  volume       = {157},
  pages        = {103728},
  year         = {2026},
  url          = {https://doi.org/10.1016/j.jcss.2025.103728},
  doi          = {10.1016/J.JCSS.2025.103728},
  bibsource    = {dblp computer science bibliography, https://dblp.org}
}

@article{Koutis2009LIMITSAA,
  title={LIMITS and Applications of Group Algebras for Parameterized Problems},
  author={Ioannis Koutis and Ryan Williams},
  journal={ACM Transactions on Algorithms (TALG)},
  year={2009},
  volume={12},
  pages={1 - 18},
  url={https://api.semanticscholar.org/CorpusID:12692062}
}

@article{DBLP:journals/tcs/DiazST02,
  author       = {Josep D{\'{\i}}az and
                  Maria J. Serna and
                  Dimitrios M. Thilikos},
  title        = {Counting H-colorings of partial k-trees},
  journal      = {Theor. Comput. Sci.},
  volume       = {281},
  number       = {1-2},
  pages        = {291--309},
  year         = {2002},
  url          = {https://doi.org/10.1016/S0304-3975(02)00017-8},
  doi          = {10.1016/S0304-3975(02)00017-8},
  bibsource    = {dblp computer science bibliography, https://dblp.org}
}

@inproceedings{DBLP:conf/soda/WilliamsWWY15,
  author       = {Virginia {Vassilevska Williams} and
                  Joshua R. Wang and
                  Richard Ryan Williams and
                  Huacheng Yu},
  editor       = {Piotr Indyk},
  title        = {Finding Four-Node Subgraphs in Triangle Time},
  booktitle    = {Proceedings of the Twenty-Sixth Annual {ACM-SIAM} Symposium on Discrete
                  Algorithms, {SODA} 2015, San Diego, CA, USA, January 4-6, 2015},
  pages        = {1671--1680},
  publisher    = {{SIAM}},
  year         = {2015},
  url          = {https://doi.org/10.1137/1.9781611973730.111},
  doi          = {10.1137/1.9781611973730.111},
  bibsource    = {dblp computer science bibliography, https://dblp.org}
}

@article{doi:10.1137/140978211,
author = {Floderus, Peter and Kowaluk, Miros{\L}aw and Lingas, Andrzej and Lundell, Eva-Marta},
title = {Detecting and Counting Small Pattern Graphs},
journal = {SIAM Journal on Discrete Mathematics},
volume = {29},
number = {3},
pages = {1322-1339},
year = {2015},
doi = {10.1137/140978211}
}

\appendix
\section{An example finding 4-paths in Bull graph $\mathcal{B}$}
For the homomorphism polynomial, $\homp_{P_4,\mathcal{B}}({(y_v)}_{v\in V(\mathcal{B})})$, consider variables $y_v \in R$, where $R$ is a commutative ring with characteristic $\mathrm{char}(R) = 2$, and $y^2 = 0 $ for all $ y \in R$. The automorphism group of $P_4$ has two elements, the identity and the reversal $1\text{-}2\text{-}3\text{-}4 \leftrightarrow 4\text{-}3\text{-}2\text{-}1$. Therefore, each homomorphism from $P_4$ to $\mathcal{B}$ contributes a coefficient of $2$ to the corresponding monomial in the homomorphism polynomial. Since we are working over a ring of characteristic $2$, all coefficients are even and hence vanish identically over $R$.
\begin{figure}[ht]
\centering

\begin{minipage}{0.42\textwidth}
\centering
\begin{tikzpicture}[
    every node/.style={circle,draw,fill=black,inner sep=1.6pt}
]

\node[label=below:$1$] (v1) at (0,0) {};
\node[label=below:$2$] (v2) at (1,0) {};
\node[label=below:$3$] (v3) at (2,0) {};
\node[label=below:$4$] (v4) at (3,0) {};

\draw (v1)--(v2)--(v3)--(v4);

\end{tikzpicture}

\vspace{2mm}

$P_4$
\end{minipage}
\hfill
\begin{minipage}{0.42\textwidth}
\centering

\begin{tikzpicture}[
    every node/.style={circle,draw,fill=black,inner sep=1.6pt}
]

\node[label=left:$c$] (vc) at (0,0) {};
\node[label=left:$b$] (vb) at (0,1) {};
\node[label=above:$a$] (va) at (1,2) {};
\node[label=right:$d$] (vd) at (2,0) {};
\node[label=right:$e$] (ve) at (2,1) {};

\draw (va)--(vb)--(vc);
\draw (va)--(ve)--(vd);
\draw (vb)--(ve);

\end{tikzpicture}

\vspace{2mm}

Bull graph
\end{minipage}

\end{figure}
The even number of automorphisms of $P_4$ can be eliminated by imposing the condition $\phi(2) < \phi(3)$ on the homomorphisms $\phi: V(P_4) \mapsto V(\mathcal{B})$. The four types of monomials present in the homomorphism polynomial are as follows:
\paragraph{Homomorphisms onto a single edge.}
If all three edges of $P_4$ (i.e., $1\text{-}2$, $2\text{-}3$, $3\text{-}4$) map onto the same edge $(i,j) \in E(\mathcal{B})$, the contribution is
\[
\sum_{(i,j)\in E(\mathcal{B})} 2\, y_i^2 y_j^2.
\]
Since this term carries a square factor $y_i^2$, it vanishes identically over $R$.

\paragraph{Homomorphisms onto a 2-edge walk.}
If the image of $P_4$ revisits a vertex, for example, $1\text{-}2\text{-}3\text{-}4 \mapsto a\text{-}b\text{-}c\text{-}a$, the resulting monomial still contains a repeated (squared) variable, and hence again vanishes due to the property of the ring R.

\paragraph{Homomorphisms where $P_4$ embeds as a subgraph.}
When the image of $P_4$ is an actual (injective) subgraph of $\mathcal{B}$, no variable is repeated, and the resulting monomial is multilinear, these are the terms in the homomorphism polynomial which can contribute to induced subgraph counting.

For example, take $1\text{-}2\text{-}3\text{-}4 \mapsto a\text{-}b\text{-}e\text{-}d$. The automorphism group of $P_4$ (reversal, $1\text{-}2\text{-}3\text{-}4 \leftrightarrow 4\text{-}3\text{-}2\text{-}1$) maps this to the same image, contributing the term $y_a y_b y_e y_d$ with coefficient $2$. Because $\mathcal{B}$ contains the extra edge $b\text{-}e$, there is a second homomorphism, $1\text{-}2\text{-}3\text{-}4 \mapsto b\text{-}a\text{-}e\text{-}d$, which (together with its own reversal) contributes the same monomial $y_a y_b y_e y_d$ with coefficient $2$ as well. Combining both, gives a total coefficient $4$ for this monomial. The automorphism condition $\phi(2) < \phi(3)$ eliminates the second homomorphism, leaving a total coefficient $2$ for this monomial. However, the coefficient is still even, and hence this term vanishes identically over $R$.

\paragraph{Homomorphisms where $P_4$ embeds as an induced subgraph.}
Here the image $P_4$ is an \emph{induced} copy in $\mathcal{B}$; with no variable repeated and no extra edge between the vertices in the host graph, resulting in a multilinear monomial with co-efficient 2 (the path and its reversal).

For example, take $1\text{-}2\text{-}3\text{-}4 \mapsto c\text{-}b\text{-}e\text{-}d$. The automorphism group of $P_4$ consists of the identity and the reversal $1\text{-}2\text{-}3\text{-}4 \leftrightarrow 4\text{-}3\text{-}2\text{-}1$; the reversal sends this mapping to $1\text{-}2\text{-}3\text{-}4 \mapsto d\text{-}e\text{-}b\text{-}c$, which is the same walk traversed backward and hence produces the identical monomial $y_c y_b y_e y_d$. 
So when we impose the conditions ($\phi(2) < \phi(3)$) to avoid automorphisms, the coefficient becomes odd for all induced copies.

Hence only the multilinear terms of the homomorphism polynomial contributing to the \emph{induced} subgraph survives, as the terms in the first 3 cases vanish. The terms in the third case vanishes due to even coefficients, since the imposed condition doesn't stop it from counting the extra $P_4$ due to the extra edge.

\label{sec:p4inbull}

\section{5-Vertex Graphs}
\label{sec:5vertex}
Here each of the graph is represented as a 10 bit binary string where each bit represent which of the edges[(0,1), (0,2), (0,3), (0,4),
    (1,2), (1,3), (1,4),
    (2,3), (2,4),
    (3,4)] exists in the graph.

    \vspace{1pt}
\textbf{Good Tree Decomposition}: A tree decomposition is Good for a 5-vertex graph, if the tree width is 2 and the vertices can be spanned by a vertex and an edge.
\vspace{1pt}
\textbf{Conditions}: The conditions helps us to reduce the automorphism to odd automorphisms which helps us in cancelling the even terms present in the equation.

For Example for G02, 0011001010, This graph has two automorphisms as we saw in \ref{sec:p4inbull}, corresponding to swapping vertices 1 and 2. To eliminate this symmetry, we impose the condition $1<2$. We then consider only those homomorphisms that respect this ordering and discard any homomorphism that violates it.

We can produce all the possible graphs on 5-vertices and 6-vertices using the code below
\section{Computational Enumeration of Connected Graphs}
\label{app:connected-graphs-five}

Here we have the Sage/Python code which is used to enumerate connected
$n$-vertex graphs, compute their normalized embedding matrix, encode their
edge sets, test for $K_{n-1}$ minors, and export the matrix to a PDF.
The same implementation is used for the cases $n=5$ and $n=6$.

\subsection{Graph enumeration and edge encoding}
\label{app:graph-enumeration}

The following function enumerates all connected graphs on $n$ vertices,
orders them by automorphism-group size, and assigns a fixed ordering to
their possible edges.  The edge-bitstring gives a compact representation
of each graph with respect to this ordering.

\begin{lstlisting}[language=Python, basicstyle=\ttfamily\scriptsize,
                   breaklines=true, frame=single,
                   caption={Enumeration, ordering, and edge-bitstring encoding
                   of connected $n$-vertex graphs.},
                   label={lst:graph-enumeration}]
from sage.graphs.graph_generators import graphs

def connected_graphs(n):
    G = [H for H in graphs(n) if H.is_connected()]

    labels = [f"G{i+1:03d}" for i in range(len(G))]
    aut = [
        int(H.automorphism_group().cardinality())
        for H in G
    ]

    order = sorted(
        range(len(G)),
        key=lambda i: aut[i]
    )

    G = [G[i] for i in order]
    labels = [labels[i] for i in order]
    aut = [aut[i] for i in order]

    return G, labels, aut


def edge_order(n):
    return [
        (u, v)
        for u in range(n)
        for v in range(u + 1, n)
    ]


def edge_bitstring(G):
    return ''.join(
        '1' if G.has_edge(u, v) else '0'
        for u, v in edge_order(G.order())
    )
\end{lstlisting}

\subsection{Embedding matrix}
\label{app:embedding-matrix}

For graphs $P$ and $H$, the following function counts injective vertex maps
from $P$ to $H$ that preserve every edge of $P$.  The matrix construction
then normalizes this count by the size of the automorphism group of the
pattern graph.

\begin{lstlisting}[language=Python, basicstyle=\ttfamily\scriptsize,
                   breaklines=true, frame=single,
                   caption={Construction of the normalized embedding matrix.},
                   label={lst:embedding-matrix}]
from itertools import permutations

def embedding_count(P, H):
    k = P.order()
    P_vertices = list(P.vertices())
    H_vertices = list(H.vertices())

    P_edges = {
        (min(u, v), max(u, v))
        for u, v, _ in P.edges()
    }

    count = 0

    for S in permutations(H_vertices, k):
        mapping = dict(zip(P_vertices, S))

        if all(
            H.has_edge(mapping[u], mapping[v])
            for u, v in P_edges
        ):
            count += 1

    return count


def embedding_matrix(G, aut):
    m = len(G)
    A = [[0] * m for _ in range(m)]

    for i in range(m):
        for j in range(m):
            A[i][j] = (
                embedding_count(G[i], G[j])
                // aut[i]
            )

    return A
\end{lstlisting}

\subsection{$K_{n-1}$-minor testing}
\label{app:minor-testing}

For an $n$-vertex graph $G$, the following function tests whether $G$ or
its complement contains a $K_{n-1}$ minor.

\begin{lstlisting}[language=Python, basicstyle=\ttfamily\scriptsize,
                   breaklines=true, frame=single,
                   caption={Testing for a $K_{n-1}$ minor in a graph and its complement.},
                   label={lst:minor-testing}]
def has_Kn_minus_1_minor(G):
    n = G.order()
    K = graphs.CompleteGraph(n - 1)

    try:
        has_G_minor = bool(G.minor(K))
    except ValueError:
        has_G_minor = False

    try:
        has_complement_minor = bool(
            G.complement().minor(K)
        )
    except ValueError:
        has_complement_minor = False

    return has_G_minor, has_complement_minor


def check_Kn_minus_1_minors(G):
    return [
        has_Kn_minus_1_minor(H)
        for H in G
    ]
\end{lstlisting}

\subsection{Matrix export and PDF generation}
\label{app:matrix-pdf}

The following function exports the matrix to CSV and generates a PDF
containing only the matrix.  No graph gallery or graph images are generated.

\begin{lstlisting}[language=Python, basicstyle=\ttfamily\scriptsize,
                   breaklines=true, frame=single,
                   caption={Export and PDF generation for the embedding matrix.},
                   label={lst:matrix-pdf}]
import csv
from reportlab.pdfgen import canvas
from reportlab.lib.pagesizes import letter

def export_matrix(A, filename="matrix.csv"):
    with open(filename, "w", newline="") as f:
        writer = csv.writer(f)

        try:
            writer.writerows(A.rows())
        except AttributeError:
            writer.writerows(A)


def save_matrix_pdf(A, labels, filename):
    c = canvas.Canvas(filename, pagesize=letter)
    pw, ph = letter

    c.setFont("Courier", 6)

    x0 = 30
    y = ph - 40
    row_h = 10

    header = "     " + " ".join(
        f"{label:>5}" for label in labels
    )
    c.drawString(x0, y, header)
    y -= 20

    for i, row in enumerate(A):
        line = (
            f"{labels[i]:<5}" +
            " ".join(f"{x:5d}" for x in row)
        )

        c.drawString(x0, y, line)
        y -= row_h

        if y < 40:
            c.showPage()
            c.setFont("Courier", 6)
            y = ph - 40

    c.save()
\end{lstlisting}

\subsection{Input examples}
\label{app:input-examples}

The general framework is instantiated for $n=5$ and $n=6$ by changing only
the input value of $n$.  The following example generates the corresponding
graph family, embedding matrix, matrix PDF, and $K_{n-1}$-minor checks.

\begin{lstlisting}[language=Python, basicstyle=\ttfamily\scriptsize,
                   breaklines=true, frame=single,
                   caption={Example inputs for the five- and six-vertex cases.},
                   label={lst:input-examples}]
for n in [5, 6]:
    G, labels, aut = connected_graphs(n)

    A = embedding_matrix(G, aut)

    export_matrix(
        A,
        filename=f"matrix_{n}.csv"
    )

    save_matrix_pdf(
        A,
        labels,
        filename=f"matrix_{n}.pdf"
    )

    minor_results = check_Kn_minus_1_minors(G)
\end{lstlisting}

\subsection{Computational output}

When n=5, the first computation reports that there are $21$ connected graphs on five
vertices and writes the matrix to the file
\texttt{matrix\_5.pdf}. The second computation outputs the
bitstrings listed in Table~\ref{tab:five-vertex-graphs}.

We can verify the tree decomposition and the conditions for each of the 5-vertex graphs using the code below
\section{Code to verify tree decompositions and automorphism-avoidance conditions}
\label{app:verify-decomposition}
\begin{lstlisting}
from itertools import combinations

def vertices_covered_by_q_edges(H, q):
    """
    Return True if there exist q edges in H
    whose endpoints cover all vertices of H.
    """
    V = set(H.vertices())
    E = list(H.edges(labels=False))

    if not V:
        return True  # empty graph

    if len(V) > 2*q:
        return False  # impossible even in best case

    for edgeset in combinations(E, q):
        covered = set()
        for u, v in edgeset:
            covered.add(u)
            covered.add(v)
        if covered == V:
            return True

    return False


def bag_is_spannable(G, bag, p, q):
    """
    Check if bag can be spanned by p vertices and q edges.
    """
    H = G.subgraph(bag)
    B = list(bag)

    if len(B) < p:
        return False

    for P in combinations(B, p):
        R = [v for v in B if v not in P]
        HR = H.subgraph(R)
        if vertices_covered_by_q_edges(HR, q):
            return True

    return False


def verify_decomposition(G, T, conditions, p, q):
    p = int(p)
    q = int(q)

    # Parse bags
    bags = T.split("-")
    bag_lists = [[int(v) for v in bag] for bag in bags]

    # Parse conditions
    conds = []
    for c in conditions.split(","):
        c = c.strip()
        if not c:
            continue
        i, j = map(int, c.split("<"))
        if i >= j:
            return False, f"Invalid condition {i}<{j}"
        conds.append((i, j))

    # Spanning condition
    for idx, bag in enumerate(bag_lists):
        if not bag_is_spannable(G, bag, p, q):
            return (
                False,
                f"Bag {idx+1} not spannable by p={p}, q={q}"
            )

    # Comparability condition
    for (i, j) in conds:
        found = False
        for bag in bag_lists:
            if i in bag and j in bag:
                found = True
                break
        if not found:
            return False, f"Condition {i}<{j} not witnessed"

    # Automorphism parity condition
    count = 0
    conds = [(str(i), str(j)) for (i, j) in conds]

    for sigma in G.automorphism_group():
        ok = True
        for (i, j) in conds:
            if sigma(i) >= sigma(j):
                ok = False
                break
        if ok:
            count += 1

    if count % 2 == 0:
        return False, f"Automorphism count even ({count})"

    return True, "OK"

G = Graph([('4','3'), ('3','0'), ('0','4'), ('4','2'), ('4','1')])

T = "043-124"
conditions = "1<2, 0<3"

verify_decomposition(G, T, conditions, p=1, q=1)
\end{lstlisting}
\paragraph{Verification procedure.}
The function \texttt{verify\_decomposition} checks whether a given sequence of
bags is valid for parameters \(p\) and \(q\). Each bag must be
\((p,q)\)-spannable: after removing \(p\) vertices, the remaining vertices are
covered by the endpoints of \(q\) edges. This is verified by enumerating all
choices of removed vertices and testing edge coverage in the induced subgraph.

The procedure also checks that every comparability condition \(i<j\) is witnessed by some bag, and that the number of graph automorphisms preserving all such conditions is odd. Any violation causes the verification to fail with an explicit error.

Below is the table of all the possible 5 vertex simple connected graphs with their (p,q) tree decomposition, we can use the above code to verify each tree decomposition is a valid tree decomposition, can be spanned by at most p vertices and q edges, and after applying conditions only odd automorphisms survive. \\
\noindent

\begin{longtable}{ccrllr}
    \label{tab:five-vertex-graphs}
\textbf{Name} & \textbf{Adjacency} & \textbf{\#Aut} &
\textbf{t.d.} & \textbf{Checks} & \textbf{\#AutRem} \\
\hline
\endfirsthead
\textbf{Name} & \textbf{Adjacency} & \textbf{\#Aut} &
\textbf{t.d.} & \textbf{Checks} & \textbf{\#AutRem} \\
\hline
\endhead
G01 & 0001001011 & 24 &124-024-034  & - & - \\
G02 & 0011001010 &2 &034-142   & $1<2$ & 1 \\
G03 & 0011001011 & 4&034-124-234  & $1<2,2<3,3<0$ & 1 \\
G04 & 0011011010 & 2&013-014-24   & $0<1$ & 1\\
G05 & 0011010011 & 2&013-034-24   & $3<4$ & 1 \\
G06 & 0011011011 &2 &034-134-24   & $0<1$ & 1 \\
G07 & 0011011110 & 12 &034-134-234  & - & - \\
G08 & 0011011111 & 12 &034-134-234  & - & - \\
G09 & 0101011000 & 2&042-041-13 & $0<1$ & 1 \\
G10 & 0101011010 & 2&042-134  & $0<2$ & 1 \\
G11 & 0101011011 & 8&024-034-134  & $2<0,0<3,3<1$ & 1 \\
G12 & 0110011010 & 10 &143-034-024 & $3<4$ & 5 \\
G13 & 0111011010 & 2 &024-014-013 & $0<4$ & 1 \\
G14 & 0111011011 & 2 &134-034-024 & $0<3$ & 1 \\
G15 & 0111001110 & 2 &14-024-023 & $0<2$ & 1 \\
G16 & 0111001111 & 6 &- & - & - \\
G17 & 0111011111 & 4 &- & - & - \\
G18 & 0111111010 & 4 &- & - & - \\
G19 & 0111111011 & 8 &- & - & - \\
G20 & 0111111111 & 12 &- & - & - \\
G21 & 1111111111 & 120 &- & - & - \\

\caption{$(1, 1)$-width tree decompositions for $5$-vertex graphs}
\end{longtable}
{\scriptsize

\setlength{\tabcolsep}{2pt}

\begin{NiceTabular}{c|rrrrrrrrrrrrrrrrrrrrr}

[

  No of $G_i$ in $G_j$

]

&G01 & G02 & G03 & G04 & G05 & G06 & \columncolor{c3}{G07} & \columncolor{c3}{G08} & G09 & G10 & G11 & G12 & G13 & G14 & G15

& \columncolor{c1}G16 & \columncolor{c1}G17 & \columncolor{c1}G18

& \columncolor{c1}G19 & \columncolor{c1}G20 & \columncolor{c1}G21 \\

\Block[fill=c2]{1-1}{G01} & 1 & 0 & 1 & 0 & 0 & 1 & 0 & 2 & 0 & 0 & 1 & 0 & 0 & 1 & 0 & \Block[fill=c2]{1-1}{1} & 2 & 0 & 1 & 3 & 5 \\ 

\Block[fill=c2]{1-1}{G02} & 0 & 1 & 2 & 2 & 2 & 5 & 6 & 12 & 0 & 1 & 4 & 0 & 4 & 10 & 4 & \Block[fill=c2]{1-1}{9} & 20 & 10 & 20 & 36 & 60 \\ \rowcolor{c2} G03 & 0 & 0 & 1 & 0 & 0 & 2 & 0 & 6 & 0 & 0 & 2 & 0 & 0 & 3 & 0 & 3 & 8 & 0 & 4 & 15 & 30 \\ \rowcolor{c2} G04 & 0 & 0 & 0 & 1 & 0 & 1 & 6 & 6 & 0 & 0 & 0 & 0 & 2 & 4 & 1 & 3 & 12 & 8 & 16 & 30 & 60 \\ G05 & 0 & 0 & 0 & 0 & 1 & 2 & 0 & 6 & 0 & 0 & 0 & 0 & 1 & 5 & 2 & 6 & 14 & 4 & 12 & 30 & 60 \\ \rowcolor{c2} G06 & 0 & 0 & 0 & 0 & 0 & 1 & 0 & 6 & 0 & 0 & 0 & 0 & 0 & 2 & 0 & 3 & 10 & 0 & 4 & 24 & 60 \\ \Block[fill=c3]{1-1}{G07} & 0 & 0 & 0 & 0 & 0 & 0 & 1 & 1 & 0 & 0 & 0 & 0 & 0 & 0 & 0 & 0 & 1 & 1 & 2 & 4 & 10 \\ \Block[fill=c3]{1-1}{G08} & 0 & 0 & 0 & 0 & 0 & 0 & 0 & 1 & 0 & 0 & 0 & 0 & 0 & 0 & 0 & 0 & 1 & 0 & 0 & 3 & 10 \\ G09 & 0 & 0 & 0 & 2 & 1 & 2 & 6 & 6 & 1 & 2 & 4 & 5 & 7 & 10 & 4 & 6 & 18 & 14 & 24 & 36 & 60 \\ \rowcolor{c2} G10 & 0 & 0 & 0 & 0 & 0 & 0 & 0 & 0 & 0 & 1 & 4 & 0 & 2 & 6 & 2 & 3 & 12 & 6 & 16 & 30 & 60 \\ \rowcolor{c2} G11 & 0 & 0 & 0 & 0 & 0 & 0 & 0 & 0 & 0 & 0 & 1 & 0 & 0 & 1 & 0 & 0 & 2 & 0 & 2 & 6 & 15 \\ G12 & 0 & 0 & 0 & 0 & 0 & 0 & 0 & 0 & 0 & 0 & 0 & 1 & 1 & 1 & 0 & 0 & 2 & 2 & 4 & 6 & 12 \\ 

G13 & 0 & 0 & 0 & 0 & 0 & 0 & 0 & 0 & 0 & 0 & 0 & 0 & 1 & 2 & 0 & 0 & 6 & 4 & 12 & 24 & 60 \\ 

G14 & 0 & 0 & 0 & 0 & 0 & 0 & 0 & 0 & 0 & 0 & 0 & 0 & 0 & 1 & 0 & 0 & 4 & 0 & 4 & 18 & 60 \\ \rowcolor{c2} G15 & 0 & 0 & 0 & 0 & 0 & 0 & 0 & 0 & 0 & 0 & 0 & 0 & 0 & 2 & 1 & 3 & 8 & 2 & 8 & 24 & 60 \\ \rowcolor{c1} G16 & 0 & 0 & 0 & 0 & 0 & 0 & 0 & 0 & 0 & 0 & 0 & 0 & 0 & 0 & 0 & 1 & 2 & 0 & 0 & 6 & 20 \\ \rowcolor{c1} G17 & 0 & 0 & 0 & 0 & 0 & 0 & 0 & 0 & 0 & 0 & 0 & 0 & 0 & 0 & 0 & 0 & 1 & 0 & 0 & 6 & 30 \\ \rowcolor{c1} G18 & 0 & 0 & 0 & 0 & 0 & 0 & 0 & 0 & 0 & 0 & 0 & 0 & 0 & 0 & 0 & 0 & 1 & 1 & 4 & 9 & 30 \\ \rowcolor{c1} G19 & 0 & 0 & 0 & 0 & 0 & 0 & 0 & 0 & 0 & 0 & 0 & 0 & 0 & 0 & 0 & 0 & 0 & 0 & 1 & 3 & 15 \\ \rowcolor{c1} G20 & 0 & 0 & 0 & 0 & 0 & 0 & 0 & 0 & 0 & 0 & 0 & 0 & 0 & 0 & 0 & 0 & 0 & 0 & 0 & 1 & 10 \\ \rowcolor{c1} G21 & 0 & 0 & 0 & 0 & 0 & 0 & 0 & 0 & 0 & 0 & 0 & 0 & 0 & 0 & 0 & 0 & 0 & 0 & 0 & 0 & 1

\end{NiceTabular}
}
Number of occurences of $5$-vertex graphs in $5$-vertex supergraphs
\vspace{0.5cm}
\\
Each of the rows are classified into 3 categories in the above table.\cbox{c2} represents all the graphs which have a valid (p,q) tree decomposition and after applying conditions only odd automorphisms survive. These graphs and graphs with no color assigned are the graphs which we can compute in $\widetilde{O}(n,m)$ time. \cbox{c1} represents those graphs which has a $K_4-minor$. \cbox{c3} represents those graphs which have odd number of occurrences in a $K_4-minor$ graph.

\section{List of 6-Vertex Graphs}
\label{sec:6vertex}
List of all 6-vertex graphs with their corresponding tree
decomposition along with the automorphism-avoidance conditions which
can be detected in $\widetilde{O}(m^2)$-time.
\begin{longtable}{ccrllr}
\textbf{Name} & \textbf{Adjacency} & \textbf{\#Aut} &
\textbf{t.d.} & \textbf{Checks} & \textbf{\#AutRem} \\
\hline
\endfirsthead
\textbf{Name} & \textbf{Adjacency} & \textbf{\#Aut} &
\textbf{t.d.} & \textbf{Checks} & \textbf{\#AutRem} \\
\hline
\endhead
G01 & 001110011001010 & 1 & 0235-0145 & - & 1 \\
G02 & 001110011001001 & 1 & 1245-0345 & - & 1 \\
G03 & 001010011010001 & 1 & 1245-0135 & - & 1 \\
G04 & 001110011010011 & 1 & 1245-0345 & - & 1 \\
G05 & 001100101010011 & 1 & 0245-0135 & - & 1 \\
G06 & 001110101011010 & 1 & 0245-0135 & - & 1 \\
G07 & 001110101010011 & 1 & 0245-0135 & - & 1 \\
G08 & 001110101011110 & 1 & 0135-0234 & - & 1 \\
G09 & 000110010001010 & 2 & 0145-235 & 2$<$3 & 1 \\
G10 & 000110010001011 & 2 & 0145-0235 & 2$<$3 & 1 \\
G11 & 001010011001000 & 2 & 14-0125-03 & 0$<$1 &1  \\
G12 & 001010011001010 & 2 & 0235-415 & 0$<$3 & 1 \\
G13 & 001110011001000 & 2 & 03-0154-25 & 0$<$5 &1  \\
G14 & 001100011001010 & 2 & 1235-0134 & 0$<$4 & 1 \\
G15 & 001110011001011 & 2 & 0345-1245 & 0$<$4 & 1 \\
G16 & 001010011010000 & 2 & 1245-0135 & 1$<$5 & 1 \\
G17 & 001010011011000 & 2 & 035-1245 & 1$<$2 & 1 \\
G18 & 001010011010010 & 2 & 0135-124 & 0$<$3 & 1 \\
G19 & 001010011011001 & 2 & 035-1245 & 1$<$2 & 1 \\
G20 & 001010011010011 & 2 & 0135-1245 & 0$<$3 &1  \\
G21 & 001110011011010 & 2 & 1245-0345 & 1$<$2 & 1 \\
G22 & 001110011011011 & 2 & 0345-1245 & 1$<$2 & 1 \\
G23 & 001110011001110 & 2 & 1245-0345 & 0$<$3 & 1 \\
G24 & 001110011001111 & 2 & 1245-0345 & 0$<$3 & 1 \\
G25 & 001110111001010 & 2 & 1235-0135-014 & 0$<$1 & 1 \\
G26 & 001110100011010 & 2 & 0245-035-13 & 2$<$4 & 1 \\
G27 & 001100101011010 & 2 & 1235-0234 & 3$<$5 & 1 \\
G28 & 001100101011011 & 2 & 0135-0345-0245 & 3$<$4 & 1 \\
G29 & 001110101011011 & 2 & 0135-0345-0245 & 3$<$4 & 1 \\
G30 & 001110111011010 & 2 & 0135-0125-0124 & 0$<$1 & 1 \\
G31 & 001110111011011 & 2 & 1245-0145-0135 & 0$<$1 & 1 \\
G32 & 001110111001110 & 2 & 0134-0135-25 & 0$<$1 & 1 \\
G33 & 001110110001111 & 2 & 1345-0345-25 & 3$<$4 & 1 \\
G34 & 001110101011111 & 2 & 0135-0345-0245 & 3$<$4 & 1 \\
G35 & 010110101011000 & 2 & 0245-135 & 0$<$2 & 1 \\
G36 & 010110111011010 & 2 & 0245-0145-135 & 0$<$2 & 1 \\
G37 & 010110111010011 & 2 & 0245-1345 & 4$<$5 & 1 \\
G38 & 010110111011011 & 2 & 0245-1345 & 0$<$2 & 1 \\
G39 & 011010111011000 & 2 & 1245-0125-0135 & 0$<$2 & 1 \\
G40 & 011010111011010 & 2 & 0235-1235-1245 & 1$<$2 & 1 \\
G41 & 011110111011010 & 2 & 0135-0125-0124 & $1<4$ & 1 \\
G42 & 011110111010011 & 2 & 1345-0345-024 & 1$<$3 & 1 \\
G43 & 011110111011011 & 2 & 0245-0145-0135 & 1$<$3 & 1 \\
G44 & 011110011111010 & 2 & 0235-0245-1245 & 0$<$2 & 1\\
G45 & 011110011111011 & 2 & 0145-0245-0235 & 0$<$2 & 1 \\
G48 & 000110011010010 & 4 & 0124-0145-0135 & 0$<$1, 4$<$5 & 1 \\
G49 & 000110011010011 & 4 & 1245-035 & 0$<$3, 1$<$2 & 1 \\
G54 & 001100011011010 & 4 & 0345-1245 & 0$<$3, 3$<$5 & 1 \\
G56 & 001100011011011 & 4 & 1245-0345 & 0$<$3, 3$<$5 & 1 \\
G57 & 001110010001111 & 4 & 25-0345-14 & 5$<$0, 0$<$4 & 1 \\
G59 & 001110101011000 & 4 & 0135-0245 & 0$<$3, 3$<$1 & 1 \\
G60 & 001110100010011 & 4 & 24-0345-13 & 3$<$5, 5$<$4 & 1 \\
G67 & 010110111010010 & 4 & 0124-0135 & 1$<$4, 4$<$0 & 1 \\
G69 & 010110110011011 & 4 & 0245-4513 & 2$<$0, 0$<$5 & 1 \\
G70 & 011010111010010 & 4 & 0124-0135 & 1$<$4, 4$<$2 & 1 \\
G74 & 000110001001010 & 6 & 1235-045 & 1$<$2, 2$<$3, & 1 \\
G75 & 000110011011010 & 6 & 0124-0125-35 & 0$<$1, 1$<$2 & 1 \\
G76 & 000110011011011 & 6 & 0245-1245-35 & 0$<$2, 2$<$1 & 3 \\
G77 & 001110010001001 & 6& 0345-1245 & 0$<$4, 4$<$5 & 1 \\
G78 & 001100101011111 & 6 & 2345-0345-1345 & 3$<$4,4$<$5 & 1  \\
G79 & 010110101011001 & 6 & 0245-135 & 0$<$2, 2$<$4 & 1 \\
G80 & 011110001111110 & 6 & 0234-0235-15 & 0$<$2, 2$<$3 & 1 \\
G90 & 011010111011011 & 10 & 0135-0145-0245 & 0$<$2 & 5 \\
G93 & 001100101011000 & 12 & 125-124-140-130 & 1$<$5, 5$<$2, 2$<$4, 4$<$0, 0$<$5 & 1 \\
G97 & 011100111011010 & 12 & 0234-1234-1235 & 0$<$2, 2$<$4, 4$<$1 &  1\\
\caption{Six-vertex pattern graphs with $\widetilde{O}(m^2)$-time algorithms for induced subgraph detection.}
\label{tab:g6}
\end{longtable}

\newcommand{\SixGraph}[4]{%
\begin{minipage}[t]{0.31\textwidth}
\centering
\textbf{#1}\par\vspace{1mm}
\begin{tikzpicture}[
    x=0.43cm,y=0.43cm,
    vertex/.style={circle,fill=black,inner sep=1.8pt},
    bag/.style={
        draw,
        inner sep=1.5pt,
        font=\scriptsize,
        minimum height=0.34cm
    },
    every node/.style={font=\scriptsize}
]
\node[vertex,label=above:0] (v0) at (0,1.4) {};
\node[vertex,label=right:1] (v1) at (1.05,0.7) {};
\node[vertex,label=right:2] (v2) at (1.05,-0.7) {};
\node[vertex,label=below:3] (v3) at (0,-1.4) {};
\node[vertex,label=left:4] (v4) at (-1.05,-0.7) {};
\node[vertex,label=left:5] (v5) at (-1.05,0.7) {};

\foreach \u/\v in {#2}{%
    \draw (v\u)--(v\v);%
}

\def\prev{0}%
\foreach \bag [count=\i] in {#3}{%
    \ifcase\i
    \or\def\by{1.00}
    \or\def\by{0.25}
    \or\def\by{-0.50}
    \or\def\by{-1.25}
    \fi
    \node[bag] (b\i) at (2.25,\by) {\bag};
    \ifnum\i>1
        \draw (b\prev.south)--(b\i.north);
    \fi
    \xdef\prev{\i}%
}
\end{tikzpicture}

\if\relax\detokenize{#4}\relax
\else
\par\vspace{-1mm}
$#4$
\fi
\end{minipage}%
}

\SixGraph{G01}{0/3,0/4,0/5,1/4,1/5,2/5,3/5}{0235,0145}{}
\hfill
\SixGraph{G02}{0/3,0/4,0/5,1/4,1/5,2/5,4/5}{1245,0345}{}
\hfill
\SixGraph{G03}{0/3,0/5,1/4,1/5,2/4,4/5}{1245,0135}{}

\par\medskip

\SixGraph{G04}{0/3,0/4,0/5,1/4,1/5,2/4,3/5,4/5}{1245,0345}{}
\hfill
\SixGraph{G05}{0/3,0/4,1/3,1/5,2/4,3/5,4/5}{0245,0135}{}
\hfill
\SixGraph{G06}{0/3,0/4,0/5,1/3,1/5,2/4,2/5,3/5}{0245,0135}{}

\par\medskip

\SixGraph{G07}{0/3,0/4,0/5,1/3,1/5,2/4,3/5,4/5}{0245,0135}{}
\hfill
\SixGraph{G08}{0/3,0/4,0/5,1/3,1/5,2/4,2/5,3/4,3/5}{0135,0234}{}
\hfill
\SixGraph{G09}{0/4,0/5,1/4,2/5,3/5}{0145,235}{2<3}

\par\medskip

\SixGraph{G10}{0/4,0/5,1/4,2/5,3/5,4/5}{0145,0235}{2<3}
\hfill
\SixGraph{G11}{0/3,0/5,1/4,1/5,2/5}{14,0125,03}{0<1}
\hfill
\SixGraph{G12}{0/3,0/5,1/4,1/5,2/5,3/5}{0235,415}{0<3}

\par\medskip

\SixGraph{G13}{0/3,0/4,0/5,1/4,1/5,2/5}{03,0154,25}{0<5}
\hfill
\SixGraph{G14}{0/3,0/4,1/4,1/5,2/5,3/5}{1235,0134}{0<4}
\hfill
\SixGraph{G15}{0/3,0/4,0/5,1/4,1/5,2/5,3/5,4/5}{0345,1245}{0<4}

\par\medskip

\SixGraph{G16}{0/3,0/5,1/4,1/5,2/4}{1245,0135}{1<5}
\hfill
\SixGraph{G17}{0/3,0/5,1/4,1/5,2/4,2/5}{035,1245}{1<2}
\hfill
\SixGraph{G18}{0/3,0/5,1/4,1/5,2/4,3/5}{0135,124}{0<3}

\par\medskip

\SixGraph{G19}{0/3,0/5,1/4,1/5,2/4,2/5,4/5}{035,1245}{1<2}
\hfill
\SixGraph{G20}{0/3,0/5,1/4,1/5,2/4,3/5,4/5}{0135,1245}{0<3}
\hfill
\SixGraph{G21}{0/3,0/4,0/5,1/4,1/5,2/4,2/5,3/5}{1245,0345}{1<2}

\par\medskip

\SixGraph{G22}{0/3,0/4,0/5,1/4,1/5,2/4,2/5,3/5,4/5}{0345,1245}{1<2}
\hfill
\SixGraph{G23}{0/3,0/4,0/5,1/4,1/5,2/5,3/4,3/5}{1245,0345}{0<3}
\hfill
\SixGraph{G24}{0/3,0/4,0/5,1/4,1/5,2/5,3/4,3/5,4/5}{1245,0345}{0<3}

\par\medskip

\SixGraph{G25}{0/3,0/4,0/5,1/3,1/4,1/5,2/5,3/5}{1235,0135,014}{0<1}
\hfill
\SixGraph{G26}{0/3,0/4,0/5,1/3,2/4,2/5,3/5}{0245,035,13}{2<4}
\hfill
\SixGraph{G27}{0/3,0/4,1/3,1/5,2/4,2/5,3/5}{1235,0234}{3<5}

\par\medskip

\SixGraph{G28}{0/3,0/4,1/3,1/5,2/4,2/5,3/5,4/5}{0135,0345,0245}{3<4}
\hfill
\SixGraph{G29}{0/3,0/4,0/5,1/3,1/5,2/4,2/5,3/5,4/5}{0135,0345,0245}{3<4}
\hfill
\SixGraph{G30}{0/3,0/4,0/5,1/3,1/4,1/5,2/4,2/5,3/5}{0135,0125,0124}{0<1}

\par\medskip

\SixGraph{G31}{0/3,0/4,0/5,1/3,1/4,1/5,2/4,2/5,3/5,4/5}{1245,0145,0135}{0<1}
\hfill
\SixGraph{G32}{0/3,0/4,0/5,1/3,1/4,1/5,2/5,3/4,3/5}{0134,0135,25}{0<1}
\hfill
\SixGraph{G33}{0/3,0/4,0/5,1/3,1/4,2/5,3/4,3/5,4/5}{1345,0345,25}{3<4}

\par\medskip

\SixGraph{G34}{0/3,0/4,0/5,1/3,1/5,2/4,2/5,3/4,3/5,4/5}{0135,0345,0245}{3<4}
\hfill
\SixGraph{G35}{0/2,0/4,0/5,1/3,1/5,2/4,2/5}{0245,135}{0<2}
\hfill
\SixGraph{G36}{0/2,0/4,0/5,1/3,1/4,1/5,2/4,2/5,3/5}{0245,0145,135}{0<2}

\par\medskip

\SixGraph{G37}{0/2,0/4,0/5,1/3,1/4,1/5,2/4,3/5,4/5}{0245,1345}{4<5}
\hfill
\SixGraph{G38}{0/2,0/4,0/5,1/3,1/4,1/5,2/4,2/5,3/5,4/5}{0245,1345}{0<2}
\hfill
\SixGraph{G39}{0/2,0/3,0/5,1/3,1/4,1/5,2/4,2/5}{1245,0125,0135}{0<2}

\par\medskip

\SixGraph{G40}{0/2,0/3,0/5,1/3,1/4,1/5,2/4,2/5,3/5}{0235,1235,1245}{1<2}
\hfill
\SixGraph{G41}{0/2,0/3,0/4,0/5,1/3,1/4,1/5,2/4,2/5,3/5}{0135,0125,0124}{1<4}
\hfill
\SixGraph{G42}{0/2,0/3,0/4,0/5,1/3,1/4,1/5,2/4,3/5,4/5}{1345,0345,024}{1<3}

\par\medskip

\SixGraph{G43}{0/2,0/3,0/4,0/5,1/3,1/4,1/5,2/4,2/5,3/5,4/5}{0245,0145,0135}{1<3}
\hfill
\SixGraph{G44}{0/2,0/3,0/4,0/5,1/4,1/5,2/3,2/4,2/5,3/5}{0235,0245,1245}{0<2}
\hfill
\SixGraph{G45}{0/2,0/3,0/4,0/5,1/4,1/5,2/3,2/4,2/5,3/5,4/5}{0145,0245,0235}{0<2}

\par\medskip

\SixGraph{G48}{0/4,0/5,1/4,1/5,2/4,3/5}{0124,0145,0135}{0<1, 4<5}
\hfill
\SixGraph{G49}{0/4,0/5,1/4,1/5,2/4,3/5,4/5}{1245,035}{0<3, 1<2}
\hfill
\SixGraph{G54}{0/3,0/4,1/4,1/5,2/4,2/5,3/5}{0345,1245}{0<3, 3<5}

\par\medskip

\SixGraph{G56}{0/3,0/4,1/4,1/5,2/4,2/5,3/5,4/5}{1245,0345}{0<3, 3<5}
\hfill
\SixGraph{G57}{0/3,0/4,0/5,1/4,2/5,3/4,3/5,4/5}{25,0345,14}{5<0, 0<4}
\hfill
\SixGraph{G59}{0/3,0/4,0/5,1/3,1/5,2/4,2/5}{0135,0245}{0<3, 3<1}

\par\medskip

\SixGraph{G60}{0/3,0/4,0/5,1/3,2/4,3/5,4/5}{24,0345,13}{3<5, 5<4}
\hfill
\SixGraph{G67}{0/2,0/4,0/5,1/3,1/4,1/5,2/4,3/5}{0124,0135}{1<4, 4<0}
\hfill
\SixGraph{G69}{0/2,0/4,0/5,1/3,1/4,2/4,2/5,3/5,4/5}{0245,4513}{2<0, 0<5}

\par\medskip

\SixGraph{G70}{0/2,0/3,0/5,1/3,1/4,1/5,2/4,3/5}{0124,0135}{1<4, 4<2}
\hfill
\SixGraph{G74}{0/4,0/5,1/5,2/5,3/5}{1235,045}{1<2, 2<3,}
\hfill
\SixGraph{G75}{0/4,0/5,1/4,1/5,2/4,2/5,3/5}{0124,0125,35}{0<1, 1<2}

\par\medskip

\SixGraph{G76}{0/4,0/5,1/4,1/5,2/4,2/5,3/5,4/5}{0245,1245,35}{0<2, 2<1}
\hfill
\SixGraph{G77}{0/3,0/4,0/5,1/4,2/5,4/5}{0345,1245}{0<4, 4<5}
\hfill
\SixGraph{G78}{0/3,0/4,1/3,1/5,2/4,2/5,3/4,3/5,4/5}{2345,0345,1345}{3<4,4<5}

\par\medskip

\SixGraph{G79}{0/2,0/4,0/5,1/3,1/5,2/4,2/5,4/5}{0245,135}{0<2, 2<4}
\hfill
\SixGraph{G80}{0/2,0/3,0/4,0/5,1/5,2/3,2/4,2/5,3/4,3/5}{0234,0235,15}{0<2, 2<3}
\hfill
\SixGraph{G90}{0/2,0/3,0/5,1/3,1/4,1/5,2/4,2/5,3/5,4/5}{0135,0145,0245}{0<2}

\par\medskip

\SixGraph{G93}{0/3,0/4,1/3,1/5,2/4,2/5}{125,124,140,130}{1<5, 5<2, 2<4, 4<0, 0<5}
\hfill
\SixGraph{G97}{0/2,0/3,0/4,1/3,1/4,1/5,2/4,2/5,3/5}{0234,1234,1235}{0<2, 2<4, 4<1}
\hfill

List of all 6-vertex graphs for which we have no
$\widetilde{O}(m^2)$-time algorithm for induced subgraph detection.
\begin{longtable}{ccrllr}\label{tab:notm2}
\textbf{Name} & \textbf{Adjacency} & \textbf{\#Aut} &
\textbf{t.d.} & \textbf{Checks} & \textbf{\#AutRem} \\
\hline
\endfirsthead
\textbf{Name} & \textbf{Adjacency} & \textbf{\#Aut} &
\textbf{t.d.} & \textbf{Checks} & \textbf{\#AutRem} \\
\hline
\endhead
G046 & 000110011001010 & 4 & 0145-235 & 0<1, 2<3 & 1 \\

G047 & 000110011001011 & 4 & 0145-235 & 0<1, 2<3 & 1 \\

G050 & 001010010010011 & 4 & 0345-1245 & 0<3, 1<2 & 1 \\

G051 & 001010011011010 & 4 & 1245-035 & 0<3, 1<2 & 1 \\

G052 & 001010011011011 & 4 & 1245-035 & 1<2, 0<3 & 1 \\

G053 & 001110001001110 & 4 & 0345-125 & 1<2, 0<3 & 1 \\

G055 & 001110011011001 & 4 & 1245-0345 & 1<2, 2<5 & 1 \\

G058 & 001110111001000 & 4 & 0134-0125 & 1<4, 4<3 & 1 \\

G061 & 001110111010010 & 4 & 0135-0124 & 5<3, 3<1 & 1 \\

G062 & 001110111011001 & 4 & 1245-0145-013 & 4<5, 5<1 & 1 \\

G063 & 001110111001111 & 4 & 1345-0345-25 & 1<3, 3<0 & 1 \\

G064 & 001110111011111 & 4 & 0345-2345-1235 & 2<1, 1<3 & 1 \\

G065 & 001110111110011 & 4 & 0345-1345-234 & 3<0, 0<4 & 1 \\

G066 & 010110101011010 & 4 & 0245-0125-135 & 2<0, 0<1 & 1 \\

G068 & 010110111011001 & 4 & 0245-145-13 & 4<0, 0<2 & 1 \\

G071 & 011110111011111 & 4 & - & - & - \\

G072 & 011111111011010 & 4 & - & - & - \\

G073 & 011111111011011 & 4 & - & - & - \\

G081 & 000100010001011 & 8 & 014-45-235 & 3<5, 5<4, 4<1 & 1 \\

G082 & 001010011001011 & 8 & 0235-0145 & 3<0, 0<4, 4<1 & 1 \\

G083 & 001110011011110 & 8 & 0345-1245 & 2<5, 5<0, 0<3 & 1 \\

G084 & 001110011011111 & 8 & 0345-1245 & 2<5, 5<0, 0<3 &  \\

G085 & 001110111001011 & 8 & - & - & - \\

G086 & 001110111011000 & 8 & 013-0145-245 & 2<5, 5<1, 1<3, 3<0 &  \\

G087 & 010110101010010 & 8 & 135-045-024 & 3<1, 1<5, 5<4, 4<2 & 1 \\

G088 & 011110011111001 & 8 & 023-0245-145 & 2<3, 3<0, 0<4, 4<1 & 1 \\

G089 & 011111110011011 & 8 & - & - & - \\

G091 & 000110001001011 & 12 & 045-135-125 & 0<4, 1<3 & 3 \\

G092 & 001110001001111 & 12 & 0345-125 & 0<4, 4<3, 1<2 & 1 \\

G094 & 001110111111010 & 12 & - & - & - \\

G095 & 001110111111011 & 12 & - & - & - \\

G096 & 010110101011011 & 12 & 0245-135 & 0<2, 2<4, 1<3 & 1 \\

G098 & 011110011111111 & 12 & - & - & - \\

G099 & 011110111111111 & 12 & - & - & - \\

G100 & 011111111011001 & 12 & - & - & - \\

G101 & 010110111011110 & 16 & 0245-2345-1345 & 0<2, 1<3, 2<3, 4<5 & 1 \\

G102 & 010110111011111 & 16 & 0245-2345-1345 & 0<2, 1<3, 2<3, 4<5 & 1 \\

G103 & 011111111011111 & 16 & - & - & - \\

G104 & 011110001111111 & 24 & - & - & - \\

G105 & 001110111111111 & 36 & - & - & - \\

G106 & 000110011011110 & 48 & 0145-1245-2345 & 0<1, 1<2, 2<3, 4<5 & 1 \\

G107 & 000110011011111 & 48 & 0145-1245-2345 & 0<1, 1<2, 2<3, 4<5 & 1 \\

G108 & 011111111011110 & 48 & - & - & - \\

G109 & 011111111111111 & 48 & - & - & - \\

G110 & 001110111111000 & 72 & - & - & - \\

G111 & 000010001001011 & 120 & - & - & - \\

G112 & 111111111111111 & 720 & - & - & - \\

\caption{List of all 6-vertex graphs with no $\widetilde{O}(m^2)$-time algorithm for induced subgraph detection.}
\end{longtable}

Here the graphs G073, G089,G098-100, G103-104, G109,G111 and G112 are
the graphs which are $K_5$ minor or their complements are $K_5$
minor. For graphs G050, G091 and G094, there are
$(0, 2)$-decompositions but no automorphism avoiding conditions could
be derived for those tree decompositions. All the remaining graphs in
Table~\ref{tab:notm2} are present in one of these graphs an odd number
of times. Therefore, our method cannot detect these in
$\widetilde{O}(m^2)$-time.

\section{7-Vertex Graphs}

We used SageMath to exhaustively enumerate connected graphs on seven
vertices and identify graphs having an odd number of Hamiltonian paths
or Hamiltonian cycles.  For a graph $G$, let $p_7(G)$ denote the number
of undirected Hamiltonian paths and let $c_7(G)$ denote the number of
undirected Hamiltonian cycles.

For the path enumeration, all permutations of the seven vertices were
tested.  A permutation was retained when consecutive vertices were
adjacent, and a path and its reverse were identified as the same
undirected path.  Thus, the retained graphs satisfy
\[
    p_7(G) \equiv 1 \pmod{2}.
\]
For the cycle enumeration, all vertex permutations were similarly
tested, with cycles identified under both rotation and reversal.  Only
connected graphs not already occurring in the $P_7$ collection were
considered in the second enumeration.  The retained graphs therefore
satisfy
\[
    c_7(G) \equiv 1 \pmod{2},
\]
while not belonging to the previously obtained $P_7$ collection.

For each retained graph, the order of its automorphism group was
computed using SageMath.  When required, the corresponding
automorphisms were recorded as vertex maps.  To provide a reproducible
binary representation of each graph, the $21=\binom{7}{2}$ possible
edges were placed in the fixed order
\[
\begin{aligned}
&(0,1),(0,2),(0,3),(0,4),(0,5),(0,6),\\
&(1,2),(1,3),(1,4),(1,5),(1,6),\\
&(2,3),(2,4),(2,5),(2,6),\\
&(3,4),(3,5),(3,6),(4,5),(4,6),(5,6).
\end{aligned}
\]
The corresponding $21$-bit string contains a $1$ exactly when the
associated edge is present and a $0$ otherwise.

The essential enumeration procedure is given below.  The figure
generation and file-format conversion used in the computational
workflow are omitted from the listing, since they do not affect the
graph enumeration itself.

\begin{lstlisting} [language=Python, caption={SageMath code for enumerating $P_7$ and $C_7$ graphs.}]
from itertools import permutations
from sage.graphs.graph_generators import graphs

def hamiltonian_paths(G):
    V = list(G.vertices())
    paths = set()
    for p in permutations(V):
        if all(G.has_edge(p[i], p[i+1])
               for i in range(len(V)-1)):
            paths.add(min(p, p[::-1]))
    return paths

def hamiltonian_cycles(G):
    V = list(G.vertices())
    n = len(V)
    cycles = set()

    for p in permutations(V):
        if not all(G.has_edge(p[i], p[(i+1) % n])
                   for i in range(n)):
            continue

        q = list(p)
        r = q[::-1]
        reps = [tuple(q[i:] + q[:i]) for i in range(n)]
        reps += [tuple(r[i:] + r[:i]) for i in range(n)]
        cycles.add(min(reps))

    return cycles

# Exhaustive enumeration of connected graphs on seven vertices
graphs7 = [G for G in graphs(7) if G.is_connected()]

# Graphs with an odd number of Hamiltonian paths
P7 = []
for G in graphs7:
    p7 = len(hamiltonian_paths(G))
    if p7 % 2 == 1:
        P7.append(
            (G, p7, G.automorphism_group().cardinality())
        )

# Canonical signatures of the P_7 graphs
P7_labels = {
    G.canonical_label().graph6_string()
    for G, _, _ in P7
}

# Graphs with an odd number of Hamiltonian cycles,
# excluding graphs already occurring in P_7
C7 = []
for G in graphs7:
    if G.canonical_label().graph6_string() in P7_labels:
        continue

    c7 = len(hamiltonian_cycles(G))
    if c7 % 2 == 1:
        Aut = G.automorphism_group()
        C7.append(
            (G, c7, Aut.cardinality(), list(Aut))
        )

# Fixed 21-edge encoding
edge_order = [
    (i, j) for i in range(7) for j in range(i + 1, 7)
]

def edge_bitstring(G):
    return ''.join(
        '1' if G.has_edge(u, v) else '0'
        for u, v in edge_order
    )
\end{lstlisting}

The enumeration is exhaustive over the connected graphs generated by
SageMath, while the parity conditions determine the two retained
collections.  The canonical graph labels are used only to exclude
isomorphic copies of graphs already present in the $P_7$ collection
when constructing the $C_7$ collection.

\label{sec:7vertex}
Below is the data for the graphs in which path of length 7 occurs odd number of times.
\begin{longtable}{ccrllr}\label{tab:p7}
\textbf{Name} & \textbf{Adjacency} & \textbf{\#Aut} &
\textbf{t.d.} & \textbf{Checks} & \textbf{\#AutRem} \\
\hline
\endfirsthead
\textbf{Name} & \textbf{Adjacency} & \textbf{\#Aut} &
\textbf{t.d.} & \textbf{Checks} & \textbf{\#AutRem} \\
\hline
\endhead
G001 & 000111001010010001010 & 1 & 01346-025 & - & 1 \\ 
G002 & 000111001000011001010 & 1 & 01346-02456 & - & 1 \\ 
G003 & 000110001010010001011 & 1 & 1346-02456 & - & 1 \\ 
G004 & 000111001010010001011 & 1 & 01346-0256 & - & 1 \\ 
G005 & 000101001000011010010 & 1 & 235-01246 & - & 1 \\ 
G006 & 000111001100011001010 & 1 & 1356-02456 & - & 1 \\ 
G007 & 000111001100011001011 & 1 & 02356-01456 & - & 1 \\ 
G008 & 000111001010011010010 & 1 & 01246-02356 & - & 1 \\ 
G009 & 000111001010011010001 & 1 & 02356-0146 & - & 1 \\ 
G010 & 000111001010011010011 & 1 & 02356-0146 & - & 1 \\ 
G011 & 000111001010010001110 & 1 & 01346-0245 & - & 1 \\ 
G012 & 000111001000011001110 & 1 & 01245-02346 & - & 1 \\ 
G013 & 000111001010010001111 & 1 & 01346-0245 & - & 1 \\ 
G014 & 000111001010011010111 & 1 & 01456-02356 & - & 1 \\ 
G015 & 000111001100101010011 & 1 & 01246-01356 & - & 1 \\ 
G016 & 001011001010011000000 & 1 & 02356-146 & - & 1 \\ 
G017 & 001011001010011000001 & 1 & 12456-02356 & - & 1 \\ 
G018 & 001011001110011001000 & 1 & 02356-1456 & - & 1 \\ 
G019 & 001011001110010001001 & 1 & 1456-02356 & - & 1 \\ 
G020 & 001011001100011001001 & 1 & 145-02356 & - & 1 \\ 
G021 & 001011001100010001011 & 1 & 12456-0356 & - & 1 \\ 
G022 & 001011001110011001001 & 1 & 1456-02356 & - & 1 \\ 
G023 & 001011001110001010010 & 1 & 01356-1246 & - & 1 \\ 
G024 & 001011001110001010011 & 1 & 01356-0246 & - & 1 \\ 
G025 & 001101001110011000000 & 1 & 01346-1256 & - & 1 \\ 
G026 & 001101001110010001000 & 1 & 01346-125 & - & 1 \\ 
G027 & 001101001110010000001 & 1 & 12456-0346 & - & 1 \\ 
G028 & 001101001100011000001 & 1 & 12456-0346 & - & 1 \\ 
G029 & 001101001110011001000 & 1 & 1256-01346 & - & 1 \\ 
G030 & 001101001110011000001 & 1 & 12456-01346 & - & 1 \\ 
G031 & 001101001110010001001 & 1 & 1256-01346 & - & 1 \\ 
G032 & 001101001110011001010 & 1 & 12456-01346 & - & 1 \\ 
G033 & 001111001110011000010 & 1 & 01256-01346 & - & 1 \\ 
G034 & 001111001110010001010 & 1 & 01346-01245 & - & 1 \\ 
G035 & 001111001100011001010 & 1 & 02356-01456 & - & 1 \\ 
G036 & 001111001110011001010 & 1 & 01256-01346 & - & 1 \\ 
G037 & 001111001110011001001 & 1 & 01456-02356 & - & 1 \\ 
G038 & 001111001110011000011 & 1 & 12456-03456 & - & 1 \\ 
G039 & 001111001110001010001 & 1 & 01356-01246 & - & 1 \\ 
G040 & 001111001110001011001 & 1 & 02356-01456 & - & 1 \\ 
G041 & 001101001110011011000 & 1 & 12356-01346 & - & 1 \\ 
G042 & 001101001110010011001 & 1 & 12356-01346 & - & 1 \\ 
G043 & 001101001110011011001 & 1 & 12356-01346 & - & 1 \\ 
G044 & 001101001110011010011 & 1 & 03456-12456 & - & 1 \\ 
G045 & 001110001110001001110 & 1 & 12456-03456 & - & 1 \\ 
G046 & 001111001110001001110 & 1 & 0236-01456 & - & 1 \\ 
G047 & 001111001110010011011 & 1 & 01456-02356 & - & 1 \\ 
G048 & 001111001110011001110 & 1 & 12456-03456 & - & 1 \\ 
G049 & 001111001110011001101 & 1 & 02356-01456 & - & 1 \\ 
G050 & 001110001110011001111 & 1 & 03456-12456 & - & 1 \\ 
G051 & 001011001110110000010 & 1 & 01356-1246 & - & 1 \\ 
G052 & 001011001110110001010 & 1 & 01346-01245 & - & 1 \\ 
G053 & 001011001110101010011 & 1 & 01356-12456 & - & 1 \\ 
G054 & 001011001110101001100 & 1 & 01356-12456 & - & 1 \\ 
G055 & 001011001110100001101 & 1 & 1245-01356 & - & 1 \\ 
G056 & 001011001100101001101 & 1 & 0356-12456 & - & 1 \\ 
G057 & 001011001100100001111 & 1 & 12456-01356 & - & 1 \\ 
G058 & 001011001110101001101 & 1 & 0356-12456 & - & 1 \\ 
G059 & 001011001110100001111 & 1 & 12456-0356 & - & 1 \\ 
G060 & 001011001100101001111 & 1 & 03456-12456 & - & 1 \\ 
G061 & 001010001110101001111 & 1 & 12456-0356 & - & 1 \\ 
G062 & 001011001110110001111 & 1 & 0356-12456 & - & 1 \\ 
G063 & 001111001110101011110 & 1 & 03456-12456 & - & 1 \\ 
G064 & 001111001110101011101 & 1 & 01356-01456-01246 & - & 1 \\ 
G065 & 001111001110011100010 & 1 & 01346-01256 & - & 1 \\ 
G066 & 001111001100011101001 & 1 & 02356-01345 & - & 1 \\ 
G067 & 001110001110010101011 & 1 & 01256-01346 & - & 1 \\ 
G068 & 001111001110011101010 & 1 & 01256-01346 & - & 1 \\ 
G069 & 001111001110011101001 & 1 & 01256-01346 & - & 1 \\ 
G070 & 001111001110011100011 & 1 & 01346-01256 & - & 1 \\ 
G071 & 001111001100011101011 & 1 & 03456-12456 & - & 1 \\ 
G072 & 001101010100111001000 & 1 & 02346-12356 & - & 1 \\ 
G073 & 001111010110110001010 & 1 & 01346-01245 & - & 1 \\ 
G074 & 001111010110111001001 & 1 & 01356-02456 & - & 1 \\ 
G075 & 001111010010111011010 & 1 & 02456-01356 & - & 1 \\ 
G076 & 001101010100111011010 & 1 & 02346-12356 & - & 1 \\ 
G077 & 001100010110111011010 & 1 & 23456-03456-13456 & - & 1 \\ 
G078 & 001101010110111011010 & 1 & 02346-12356 & - & 1 \\ 
G079 & 001101010110110011011 & 1 & 01356-02456 & - & 1 \\ 
G080 & 001111010110111011010 & 1 & 01356-02456 & - & 1 \\ 
G081 & 001111010100111011011 & 1 & 01356-02456 & - & 1 \\
G082 & 001110010110111011011 & 1 & 01356-02456 & - & 1 \\ 
G083 & 001101010110110011111 & 1 & 01356-02456 & - & 1 \\ 
G084 & 001110010110111101001 & 1 & 02345-12356 & - & 1 \\ 
G085 & 001111010110111101001 & 1 & 02346-02356-01356 & - & 1 \\ 
G086 & 001111010110111111010 & 1 & 23456-03456-13456 & - & 1 \\ 
G087 & 010111010110110001011 & 1 & 02456-01356 & - & 1 \\ 
G088 & 010111011110100011010 & 1 & 01246-01356 & - & 1 \\ 
G089 & 010111011110101011010 & 1 & 01356-01246 & - & 1 \\ 
G090 & 010111011110101011110 & 1 & 02456-13456 & - & 1 \\ 
G091 & 000110001000011001010 & 2 & 02456-02346-01246 & 4$<$6 & 1 \\ 
G092 & 000101001000011010011 & 2 & 02356-01246 & 4$<$5 & 1 \\ 
G093 & 000101001000011010111 & 2 & 23456-12456-02456 & 4$<$5 & 1 \\ 
G094 & 001010001010011000000 & 2 & 12456-02346 & 5$<$6 & 1 \\ 
G095 & 001011001010010001000 & 2 & 02356-01346 & 0$<$6 & 1 \\ 
G096 & 001011001110001010000 & 2 & 01356-01456-01256 & 1$<$6 & 1 \\ 
G097 & 001011001110011010010 & 2 & 01456-01256-01356 & 5$<$6 & 1 \\ 
G098 & 001011001110011010011 & 2 & 01456-02356 & 5$<$6 & 1 \\ 
G099 & 001101001100011001000 & 2 & 02356-01356-01346 & 0$<$6 & 1 \\ 
G100 & 001100001110011001000 & 2 & 12356-01356-01456 & 1$<$6 & 1 \\ 
G101 & 001101001100011001001 & 2 & 12456-01456-03456 & 0$<$5 & 1 \\ 
G102 & 001101001110010000011 & 2 & 01346-12456 & 1$<$4 & 1 \\ 
G103 & 001111001110010000011 & 2 & 01256-01456-03456 & 0$<$5 & 1 \\ 
G104 & 001110001110011011010 & 2 & 01436-01456-12456 & 1$<$4 & 1 \\ 
G105 & 001110001110011011011 & 2 & 01456-01356-12356 & 5$<$6 & 1 \\ 
G106 & 001111001100011001111 & 2 & 02456-03456-01456 & 5$<$6 & 1 \\ 
G107 & 001011001110101010000 & 2 & 01356-12456 & 1$<$6 & 1 \\ 
G108 & 001010001100101001111 & 2 & 02346-03456-01345 & 5$<$6 & 1 \\ 
G109 & 001111001100011100011 & 2 & 03456-02456-01456 & 0$<$6 & 1 \\ 
G110 & 001111011100011011011 & 2 & 02356-01356-01346 & 5$<$6 & 1 \\ 
G111 & 001101010110110001010 & 2 & 01346-12456 & 0$<$3 & 1 \\ 
G112 & 001111010110110001001 & 2 & 01356-02456 & 0$<$5 & 1 \\ 
G113 & 001111010110111010011 & 2 & 02456-01356 & 5$<$6 & 1 \\ 
G114 & 001111010100111101001 & 2 & 01356-02356-2346 & 3$<$5 & 1 \\ 
G115 & 001111010110111110011 & 2 & 03456-03456-23456 & 3$<$4 & 1 \\ 
G116 & 010111011110101011000 & 2 & 01356-01246 & 0$<$1 & 1 \\ 
G117 & 010111011110110011010 & 2 & 13456-02456 & 4$<$6 & 1 \\ 
G118 & 011011011110111010011 & 2 & 12456-01256-01356 & 5$<$6 & 1 \\ 
G119 & 011101011110111011000 & 2 & 01236-01234-12345 & 0$<$3 & 1 \\ 
G120 & 011111011110111011010 & 2 & 01246-01236-01235 & 1$<$5 & 1 \\ 
G121 & 001100001100011001000 & 14 & 01346-01236-12356 & 3$<$4 & 7 \\ 
G122 & 011110011110111011010 & 14 & 01246-01236-01235 & 1$<$4 & 7 \\ 
\caption{7-vertex graphs with odd number of occurrences of paths of length 7}
\end{longtable}

\vskip 1 cm
Below is the data for the graphs in which cycles of length 7 occur an odd number of times and did not occur in the above graphs.\par

\begin{longtable}{ccrllr}
\textbf{Name} & \textbf{Adjacency} & \textbf{\#Aut} &
\textbf{t.d.} & \textbf{Checks} & \textbf{\#AutRem} \\
\hline
\endfirsthead
\textbf{Name} & \textbf{Adjacency} & \textbf{\#Aut} &
\textbf{t.d.} & \textbf{Checks} & \textbf{\#AutRem} \\
\hline
\endhead
G01 & 001101001100011001010 & 1 & 0346-12456 & - & 1 \\
G02 & 001101001110011001001 & 1 & 0346-12456 & - & 1 \\
G03 & 001101001110011010010 & 1 & 02356-01456 & - & 1 \\
G04 & 001111001110011001111 & 1 & 01456-02356 & - & 1 \\
G05 & 001011001110110001001 & 1 & 01356-1245 & - & 1 \\
G06 & 001011001110110001011 & 1 & 0356-12456 & - & 1 \\
G07 & 001011001110101010010 & 1 & 01356-1246 & - & 1 \\
G08 & 001010001110101001110 & 1 & 12456-01356 & - & 1 \\
G09 & 001010001110101001101 & 1 & 12456-0356 & - & 1 \\
G10 & 001011001110101001110 & 1 & 12456-0356 & - & 1 \\
G11 & 001011001110101001111 & 1 & 12456-0356 & - & 1 \\
G12 & 001101001110110010011 & 1 & 1245-03456 & - & 1 \\
G13 & 001111001110101011111 & 1 & 12456-03456 & - & 1 \\
G14 & 001110001110011100011 & 1 & 12456-03456 & - & 1 \\
G15 & 001110001100011101011 & 1 & 03456-12456 & - & 1 \\
G16 & 001111001110011101011 & 1 & 01256-01346 & - & 1 \\
G17 & 001111010110111001000 & 1 & 02456-01356 & - & 1 \\
G18 & 001111010010111011001 & 1 & 02456-01356 & - & 1 \\
G19 & 001111010110111011011 & 1 & 02456-01356 & - & 1 \\
G20 & 001111010110101011110 & 1 & 01356-02456 & - & 1 \\
G21 & 001111011010011111101 & 1 & 02356-03456-1346 & - & 1 \\
G22 & 001110010110110101011 & 1 & 12456-01456-01346 & - & 1 \\
G23 & 001111010110110101011 & 1 & 01356-03456-2345 & - & 1 \\
G24 & 001111010110111111011 & 1 & 02346-02356-1356 & - & 1 \\
G25 & 010111011010110010011 & 1 & 02456-13456 & - & 1 \\
G26 & 011111011110111010011 & 1 & 01345-01456-02456 & - & 1 \\
G27 & 001100001110011001010 & 2 & 12456-01346 & 1$<$4 & 1 \\ 
G28 & 001101001110011001011 & 2 & 12456-0346 & 1$<$4 & 1 \\ 
G29 & 001111001110011001000 & 2 & 01346-01256 & 0$<$6 & 1 \\ 
G30 & 001110001110011001010 & 2 & 03465-12456 & 0$<$5 & 1 \\ 
G31 & 001111001110011001011 & 2 & 01456-02356 & 0$<$5 & 1 \\ 
G32 & 001101001110011010000 & 2 & 12356-01346 & 1$<$6 & 1 \\ 
G33 & 001011001100101010010 & 2 & 01356-1246 & 0$<$6 & 1 \\ 
G34 & 001011001100101001110 & 2 & 0145-02346 & 4$<$6 & 1 \\ 
G35 & 001111010100111001001 & 2 & 12356-02356-0246 & 2$<$3 & 1 \\ 
G36 & 001111010100101011111 & 2 & 01356-02456 & 5$<$6 & 1 \\ 
G37 & 001111010110111101000 & 2 & 1356-02356-02346 & 2$<$5 & 1 \\ 
G38 & 010111010110101010001 & 2 & 1356-02456 & 0$<$5 & 1 \\ 
G39 & 010111011010110011010 & 2 & 02345-01346 & 0$<$6 & 1 \\ 
G40 & 010111011110101011111 & 2 & 02456-13456 & 4$<$5 & 1 \\ 
G41 & 011011011110110011011 & 2 & 02356-23456-13456 & 2$<$4 & 1 \\ 
G42 & 011111011110111011011 & 2 & 01356-01256-01246 & 0$<$5 & 1 \\
\caption{7-vertex graphs with odd number of occurrences of cycles of length 7 and not in Table~\ref{tab:p7}}
\end{longtable}

\end{document}